\title[Modelling allocation match under uncertainty]{Modelling the expected probability of correct assignment under uncertainty}
\author{Tom Dvir$^{1,4}$, Renana Peres$^2$*, and Ze\'ev Rudnick$^3$}
\address{1 Racah Institute of Physics, The Hebrew University, Jerusalem 91904, Israel}
\address{2 School of Business Administration, Hebrew University of Jerusalem, Jerusalem 91905, Israel}
\address{3 School of Mathematical Sciences,
Tel Aviv University, Tel Aviv 69978, Israel}
\address{4 QuTech and Kavli Institute of Nanoscience, Delft University of Technology, 2600 GA Delft, The Netherlands}
\address{* corresponding author: renana.peres@mail.huji.ac.il}
\begin{document}

\newtheorem{open}{Open Problem}
\newtheorem{thm}{Theorem}%[section]
\newtheorem{lem}[thm]{Lemma}
\newtheorem{prop}[thm]{Proposition}
\newtheorem{cor}[thm]{Corollary}
\newtheorem{conj}{Conjecture}%[section]
\newtheorem{exercise}{Exercise}
\newtheorem{example}{Example}
% \newtheorem*{example*}{Example}

%\numberwithin{equation}{section}

\renewcommand{\figurename}{Supplementary Figure}  % this renames Figures to "Supplementary Figures"

\newcommand{\Z}{{\mathbb Z}} %cph changed from \mathbf
\newcommand{\Q}{{\mathbb Q}}
\newcommand{\R}{{\mathbb R}}
\newcommand{\C}{{\mathbb C}}
\newcommand{\N}{{\mathbb N}}
\newcommand{\FF}{{\mathbb F}}
\newcommand{\fq}{\mathbb{F}_q}
\newcommand{\feq}{\overline{\mathbb F}_q}

\newcommand{\rmk}[1]{\footnote{{\bf Comment:} #1}}

\renewcommand{\mod}{\;\operatorname{mod}}
\newcommand{\ord}{\operatorname{ord}}
\newcommand{\TT}{\mathbb{T}}
\renewcommand{\i}{{\mathrm{i}}}
\renewcommand{\d}{{\mathrm{d}}}
\renewcommand{\^}{\widehat}
\newcommand{\HH}{\mathbb H}
\newcommand{\Vol}{\operatorname{vol}}
\newcommand{\area}{\operatorname{area}}
\newcommand{\tr}{\operatorname{tr}}
\newcommand{\norm}{\mathcal N} % norm =(\frac{ n+\sqrt{n^2-4}} 2)^2
\newcommand{\intinf}{\int_{-\infty}^\infty}
\newcommand{\ave}[1]{\left\langle#1\right\rangle} %  average
\newcommand{\Var}{\operatorname{Var}}
\newcommand{\Prob}{\operatorname{Prob}}
\newcommand{\sym}{\operatorname{Sym}}
\newcommand{\disc}{\operatorname{disc}}
\newcommand{\CA}{{\mathcal C}_A}
\newcommand{\cond}{\operatorname{cond}} % conductor
\newcommand{\lcm}{\operatorname{lcm}} 

\newcommand{\Kl}{\operatorname{Kl}} %Kloosterman sum
\newcommand{\leg}[2]{\left( \frac{#1}{#2} \right)}  % Legendre symbol

\newcommand{\sumstar}{\sideset \and^{*} \to \sum}

\newcommand{\LL}{\mathcal L} %L-function of u
\newcommand{\sumf}{\sum^\flat}
\newcommand{\Hgev}{\mathcal H_{2g+2,q}}
\newcommand{\USp}{\operatorname{USp}}
\newcommand{\conv}{*}
\newcommand{\dist} {\operatorname{dist}}
\newcommand{\CF}{c_0} % Fejer constant
\newcommand{\kerp}{\mathcal K}

\newcommand{\fs}{\mathfrak S}
\newcommand{\rest}{\operatorname{Res}} % resultant
\newcommand{\af}{\mathbb A} % affine line
\newcommand{\Li}{\operatorname{Li}}
\newcommand{\Sel}{\mathcal S}
\newcommand{\SF}{\mathbf 1_{\rm SF}}
 \newcommand{\SFz}{\mathbf 1_{{\rm SF},z}}

\newcommand{\T}{\mathbb T} % torus
\newcommand{\E}{\mathbb E}
\newcommand{\length}{\operatorname{length}}
\newcommand{\nattribs}{K} % number of attributes

\newcommand{\attrib}{\mathcal A} % the space of attributes

%\title[The expected probability of correct assignment]{Modelling allocation match under uncertainty: The expected probability of correct assignment}
%\author{Tom Dvir, Renana Peres and Ze\'ev Rudnick}
%\address{Racah Institute of Physics, The Hebrew University, Jerusalem 91904, Israel}
%\email{tom.dvir@gmail.com}
%\address{School of Business Administration, Hebrew University of Jerusalem, Jerusalem 91905, Israel}
%\email{peresren@huji.ac.il}
%\address{School of Mathematical Sciences,Tel Aviv University, Tel Aviv 69978, Israel}
%\email{rudnick@tauex.tau.ac.il}

%\thanks{Z.R. was supported  by an Advanced Grant from the European Research Council under the European Union's Horizon 2020 research and innovation programme/ERC grant agreement n$^{\text{o}}$~786758  }

 %\date{\today}

\maketitle

\begin{abstract}
When making important decisions such as choosing health insurance or a school, people are often uncertain what levels of attributes will suit their true preference. After choice, they might realize that their uncertainty resulted in a mismatch: choosing a sub-optimal alternative, while another available alternative better matches their needs.

We study here the overall impact, from a central planner's perspective, of decisions under such uncertainty. We use the representation of Voronoi tessellations to locate all individuals and alternatives in an attribute space. We provide an expression for the probability of correct match, and calculate, analytically and numerically, the average percentage of matches. We test dependence on the level of uncertainty and location.

We find overall considerable mismatch even for low uncertainty - a possible concern for policy makers. We further explore a commonly used practice - allocating service representatives to assist individuals' decisions. We show that within a given budget and uncertainty level, the effective allocation is for individuals who are close to the boundary between several Voronoi cells, but are not right on the boundary.
\end{abstract}

\section*{Introduction}

Important decisions people make, such as choosing health insurance, or choosing a school, require complex considerations. In many cases these considerations are further complicated by the uncertainty, or error of individuals in understanding what levels of specific attributes match their true preferences.
For instance, in  choosing a school, people might find it hard to specify what a "good" school means to them in terms of the level of specific attributes such as the number of Math hours, intensiveness of the music program, vocational training, geographic location, or whether the athletics program should include Quidditch (Jenkin 2015) \cite{Jenkin}. After their children start attending the school, they might realize that they find the Math program less demanding, the music program too intensive, or that a 20 minute walk to school is more strenuous than they expected. Thus, they would have been happier with a school that has slightly different values on these attributes. Such patterns of post choice evaluation, regret, and disappointment have been empirically documented in the past literature (e.g. Westbrook 1987 \cite{W}; Inman, Dyer and Jia 1997 \cite{IDJ}). Representing the relevant domain in the attribute space we say that while individuals might claim to know where their preferences are located in the space, there is often uncertainty as to their true desired location. Only after the choice, they might realize that their perceived location doesn't match their true needs and desires. This uncertainty might be a result of insufficient information about the meaning of different levels of attributes for them (e.g. what parental involvement, or an intensive music program require from them), or misconception as to what they really want. 

In a market with several alternatives, such uncertainty might result in a mismatch - that is, choosing an alternative that is sub-optimal, although there are other available alternatives which better match one's real needs. For example, while parents might be certain that they want the school with the intensive Math program, they might have actually been better off in a school with a less intensive program. Therefore their true preference would be in a slightly different location in the attribute space than what they initially thought they were. The choice literature indicates that mismatches happen when the choice task is complicated, or when individuals do not have enough previous experience with the specific choice task (Mosteller and Nogee 1951) \cite{MN} 

Considerable effort is invested in reducing uncertainty to avoid mismatch in important decisions. Financial planners are used to consult in choosing health plans (McClanahan 2014) \cite{Mc}, and advisors assist in pension plan choice (PFau 2016) \cite{PFau}.   Residents of major cities such as New York City employ expensive private consultants to assist in choosing a school (Harris and Fessenden 2017) \cite{HF}. 
From the perspective of the central planner that provides and supervises these services, too many mismatches are undesirable. A large group of dissatisfied service recipients might cause a decrease in the overall social welfare, which, in turn might lead to social and economic consequences. Assuming that the central planner wants to maximize the social welfare, as a goal by itself or in order to serve political and economic stability, it would better to minimize mismatches.  

Our goal in this paper is to study the overall impact, from the perspective of the central planner, of decisions under the uncertainty described above (which we term hereafter as "uncertainty in preferences"). Similar to the school choice problem (Holmes Erickson 2017 \cite{HE}; Abdulkadiroglu et al. 2020 \cite{Abdul}), the decision scenarios we model apply to 
high involvement, multiple attribute goods and services that are monitored by a central planner. They can be credence/experience goods and services, with a high importance for customer satisfaction and a high chance for post-choice evaluation and regret. While some of their attributes (such as distance or cost) might be very directional (a rational consumer will prefer zero distance and zero cost), many other attributes (e.g. level of religiousness, intensity of the math program, hours of French per week etc.) are a matter of personal preference and can greatly vary between individuals. While the general formulation of the problem can incorporate a large number of market conditions and variables, we wish to work with a restrained set of conditions that will enable us to focus on the effect of uncertainty. Therefore, we focus on the case of no supply constraints, no specific market structure, and no interactions between individuals. Our modeling framework enables expansion to include these scenarios.

We use the representation of Voronoi tessellations to describe an attribute space with different alternatives, each having its attraction basin. 
Individuals can be also located in this space, according to their preference. Each individual has a perceived location, but since individuals might not correctly estimate the attribute levels that match their needs, this perceived location might be distant from their true preference, up to a certain uncertainty factor. The uncertainty creates an error in the perceived location of the individual, and hence can place the individual in the attraction basin of another, sub-optimal alternative, causing a mismatch.

We focus on the probability of correct match - that is, when the choice made is indeed the best alternative for this individual. We provide an expression for the probability for correct match, and show how it depends on the location in the attribute space and on the level of uncertainty. We give a formula for the average percentage of matches for low uncertainty level and use numerical simulation to extend the description for larger uncertainty.

We then extend our model by including a policy to help individuals obtain the correct decision and avoid mismatches. In some cases the central planner might offer "front-desk" services, which provide help through face-to-face or phone meetings. Such services are effective but costly. We use our model to study how  the authority can allocate service representatives to individuals within a given budget in a way that will maximize the overall level of match.

Our contribution is by studying {\bf decisions under uncertainty in preferences} from the perspective of the {\bf central planner}. We draw inspiration from two streams of literature: Decisions under uncertainty, and Matching theory. Decisions under uncertainty have been mostly modeled from the individual's point of view, and focused on the information search of individuals (Branco, Sun and Villas-Boas 2012) \cite{BSV}, on how they sample the choice alternatives (Chick and Frazier 2012) \cite{CF}, how they use social influence to compensate for the missing information (Lopez-Pintado and Watts 2009)\cite{LPW}, and how they update their preferences based on each additional information bit they receive (Erdem and Keane 1996) \cite{EK}.  These models often consider factors such as expected utility from each alternative and risk aversion (Machina 1987) \cite{Ma}. In choice modeling, random utility models were used to describe uncertainty in choice, under the assumption that some attributes are unobserved and are represented as random variables (e.g. Ben-Akiva and Lerman 1985 \cite{BAL}), or, alternatively, that the decision-making individual considers each time only a subset of the attributes (Becker, DeGroot, and Marschak 1951 \cite{BDM}).  Works on post-choice evaluation (e.g. Inman, Dyer and Jia 1997) \cite{IDJ} emphasized factors such as satisfaction and regret. This body of literature focuses on uncertainty in one's understanding of the true value of the suggested alternatives, or, as in the random utility models, that the entire attribute space is not taken into account during the choice. Our focus is on an attribute space and a set of alternatives that are entirely known to the individual, and the uncertainty in one's understanding of his/her own needs and wants. 

The implications of choice from the central planner's perspective have mostly been studied without relating to uncertainty. Studies in matching theory (Gale and Shapley 1961 \cite{GS}; Roth 1986 \cite{Roth}) suggest algorithms for matching between individuals and outlets in various scenarios (schools, houses, hospital residency (see S\"onmez and \"Unver (2011) \cite{SU} for review), where slots are limited, requiring one of the sides or both, to rank their mutual preferences. Recent works on matching have begun to incorporate uncertainty in various forms: Ehlers and Massó (2015) \cite{EM} describe a matching game where players are not sure about the preferences of other players. Hazon et al. (2012) \cite{Hazon} study forecasting voting patterns, where the ranking of candidates for each voter is not fully known to an outside observer. Aziz et al. (2020) \cite{Aziz} study the case where the individuals themselves are not certain in their rankings, but rather rank their preferences with a probability smaller than 1. These models are usually characterized by: (1) assuming limited capacity (otherwise all individuals get what they want); and (2) not having a direct access to the attributes, but rather to a rank ordering of alternatives. Their focus is to find the best matching algorithm that will create stable equilibrium.  

Our modeling perspective draws from both streams - similar to the matching models we deal with matching alternatives to individuals, from the perspective of a central planner. Similar to the decision-under-uncertainty problems, our model deals directly with the attributes and does not use ranking of alternatives. However, we do not focus on the individual level, but rather look at the entire set of alternatives and individuals. We do not assume capacity constraints since, in the presence of uncertainty, mismatches can occur even without capacity limitations. 
Our focus is not empirical estimation, or efficient matching algorithm but rather to measure the probability for correct match and its dependence on various market factors. To the best of our knowledge, this work is the first to suggest a measure for the overall probability of matches, and calculate analytically its average value. Our model enables studying specific policies for minimizing the mismatch, such as the use of service representatives. 

\section*{The space of attributes and Voronoi tessellations}
Our goal is to calculate the overall impact, from the perspective of the central planner, of decisions under uncertainty in preferences. To do so, we want to define a measure for the probability of a correct match for every possible individual preference, and then calculate its average value over a population. As explained above, most matching algorithms (S\"onmez and \"Unver (2011) \cite{SU} assume limited capacity, and the criterion for the optimal overall match is a stable equilibrium – that is, there is no pair of individuals who would be better-off by switching the alternatives they were assigned with. Therefore, these algorithms do not provide a continuous metric for the probability of a correct match. Individual level decision models that incorporated uncertainty (Erdem and Keane 1996 \cite{EK}; Ben-Akiva and Lerman 1985 \cite{BAL}) were used more for empirically estimating one's utility and rarely provide an overall view of all the individuals and alternatives. The representation we seek is one that considers the entire attribute space and range of alternatives, allows representation of the alternatives as well as the individuals, provides a continuous measure of the match probability as a function of uncertainty, and can be easily expanded to incorporate cases of interventions of the central planner, changes in the alternatives, and population changes.        
 
To do so, we define a space $\attrib$ of attributes. Each dimension in this space is a numerical representation of a single attribute in the relevant context (e.g. level of religiousness, level of parental involvement, geographic location of the school). The space is a $\nattribs$ dimensional box with boundaries, representing the range of each attribute.  

In this space we place $J$ alternatives, (such as the various schools) giving to each alternative a point $P_j$ in this  $\nattribs$-dimensional space.The location of an alternative represents its performance on each of the attributes. 
Each alternative has its attraction basin, and these partition the space of attributes $\attrib$ into a Voronoi tessellation (Obake and Suzuki 1997 \cite{OS}; De Leeuw 2005 \cite{dL}).

The construction divides the space of attributes into Voronoi cells, which are the basins of attraction: 
\[
D_j=\{x\in \attrib: \dist(x,P_j)\leq \dist(x,P_k), \quad  \forall k =1,\dots, J \}. 
\]
These are convex polyhedra, with disjoint interiors, whose union is all of $\attrib$, see Figure~\ref{fig:Voronoi}a.
 
Individuals (say, the students, or their parents) are represented as points in the attribute space $\attrib$. The location of an individual $i$ in this space, denoted by $x$, represents the true desire, or the "ideal" product of the individual (that is, a hypothetical alternative which should maximize individual $i$'s utility). 
It reflects both the desired level of attributes, as well as the importance of the attribute to the individual.
We want to match individuals to the alternative  which most closely matches their preferences. A closest match would be an alternative  $P_j$ so that the distance between the individual's location $x\in\attrib$ is not greater than the distance to any other product, i.e., that resides within the same Voronoi cell. In utility terms, one can say that the utility derived from each actual alternative $j$ can be represented as a function of the proximity of individual $i$ to the location of alternative $j$. 
 
We assume that the location of the alternatives in space is known to the individuals and is also known to the central planner. This is a reasonable assumption since consumers these days have wide access, through social media, customer reviews, and other online resources to the specifications of the alternatives in their choice set (Bronnenberg, Kim and Mela 2016\cite{BKM}).

\subsection*{Modeling uncertainty}
We add uncertainty to this representation: individuals, being sure they know what they want, locate themselves in a perceived place, which is distant from their true location in the attribute space up to an uncertainty factor $\rho$. 
 
A common distinction is made in literature between uncertainty – which assumes the probability of each alternative is known, and ambiguity – where the individual also needs to assess the probability distribution from which the alternatives are drawn (Kahn and Sarin 1988) \cite{KS}. In this paper we do not deal with ambiguity. We assume that all alternatives are available and their properties are known. We describe uncertainty in the desired level of attributes, that is, uncertainty in preferences, and not in product location, product performance, or influence of uncontrolled factors.

Our setup has appeared in Computer Science, in the "nearest-neighbor search problem", which returns the nearest neighbor of a query point $x$ in a set of points $\mathcal P$ in $\R^d$. 
Both the data (the set of points $\mathcal P$) and the query (the point $x$) may be uncertain. 
For instance (see Beskales et al. 2008 \cite{BSI}),  in location-based services, a user may request the locations of the nearest gas stations. To protect the user's privacy, an area that encloses the user's actual location may be used as the query object, while gas stations (the data objects $\mathcal P$) have deterministic locations.  In contrast to our goals, this literature focused on algorithmic and complexity aspects of the problem, see for instance the recent paper by Agrawal (2006) \cite{Agrawal} and the references there.

Due to the uncertainty $\rho$, an individual with a true location $x$, has a perceived location at a point around $x$. The perceived location is a point randomly drawn from a uniformly distributed  ball of radius $\rho$ around the true location $x$. Note that neither the individual nor the central planner know the true location $x$. All they know is the perceived location. Even if individuals are aware of the uncertainty $\rho$, they can not reconstruct the drawing process. The uniformity assumption is required for the convenience of the formal analysis, and makes sense for a finite space and for the general case, where we assume zero information on the preferences. 
 
Thus, rather than a point in the space of attributes, we actually have a ball $B(x,\rho)$ of all points at distance at most $\rho$ which define a possible perceived location of the individual whose true location is $x$. By taking the shape of a ball, we assume that the uncertainty is equal in all dimensions. This is a reasonable assumption for a general space, with no specific information on the dimension.
However, even if the uncertainty is not equal in all dimensions, the uncertainty ball can be regarded as the circumscribed ball where $\rho$ is the uncertainty in the dimension with the maximal uncertainty. 

Note, that while the uncertainty ball is uniform across attributes, and has a single radius for the entire population, the random draw of the perceived location generates heterogeneity across individuals: the perceived location is drawn for each and every individual separately, and therefore the \textit{ actual } error, namely, the distance between the perceived location and the true location varies across individuals and across attributes. The uncertainty $\rho$ can therefore be regarded as the maximum possible error in the perception.    
 
The point $x$ usually lies in a unique Voronoi cell $D_j$  which gives the correct match, while the ball $B(x,\rho)$ may intersect with some other cells.The probability $P_\rho(x)$ that the individual whose true location is the point $x$ is assigned to the correct Voronoi cell to which it belongs (that is, the cases where the choice of the individual is indeed  optimal) is the relative area (or volume) of the ball which lies in that cell, see Figure~\ref{fig:Voronoi}b: 
\begin{equation}\label{def of Prho rev}
    P_\rho(x) = \frac{\Vol(D_j\cap B(x,\rho))}{\Vol B(x,\rho)} .
\end{equation}

As illustrated in Figure~\ref{fig:Voronoi}, $P_\rho(x)$ strongly depends on the distribution of products in the attribute space, on the distance from the cell boundaries, and on the relationships between $\rho$ and the   location within the Voronoi cell. 

\subsection*{The probability for correct match}
From the perspective of the central planner which provides and supervises the services, a key measure of interest would be the effect of the uncertainty in individuals' preferences on the overall mismatch for the entire population.
A key measure we calculate is 
the {\em average} probability of correct match $\ave{P_\rho}$: 
\begin{equation}\label{def of Prho}
\ave{P_\rho} := \frac 1{\Vol(\attrib)}\int_\attrib P_\rho(x)dx
\end{equation}
that is the average of $P_\rho(x)$ over the entire attribute space - all the locations $x$, and all the Voronoi cells $D_j$.   
We seek to describe its variation as we change the uncertainty factor $\rho$. For a uniformly distributed population in the attribute space $\ave{P_\rho}$ is given by: 

\begin{equation}\label{eq:average prob}
    \ave{P_\rho} = \frac 1{\Vol(\attrib)}\sum_j \int_{D_j} \frac{\mathrm{vol}\left( B(x,\rho)\cap D_j\right)}{\mathrm{vol}B(x,\rho)}dx
\end{equation}
where $B(x, \rho)$ is the ball around $x$ of radius $\rho$,and $\int_{D_j} dx$ means integration within a Voronoi cell $j$ (see Supplementary Information Part 1 Proposition 1 for details).

To provide an intuition as to how to compute this integral, recall that for a given individual in location $x$, when $x$ is distanced more than  $\rho$ to the boundary, a match will always be obtained. However, when $x$ is closer to the boundary than $\rho$, the probability for a  mismatch grows. As illustrated in Figure~\ref{fig:match_prob}a, for each cell, there is only a finite “danger zone”, around its boundaries, where  a mismatch can occur. The cumulative area of the danger zones of all cells depends on two factors: 1) the size of $\rho$, 2) the total length of the boundaries between cells (in a general $K$ dimensional space the danger zone will be the relevant volume, and the length will be in dimension $K-1$. In the one-dimensional case, where we only have one attribute, and the boundary consists of isolated points, the "length" of the boundary will be the number of points).
For example, in Figure~\ref{fig:match_prob}a, describing a two dimensional space, this factor will be the total length of all the internal boundary segments between cells. When $\rho=0$, clearly $P_0\equiv 1$ as there is no uncertainty. When $\rho\gg 0$ is sufficiently large so that the uncertainty ball exceeds the combined size of the cells, the true location could be practically in any of the cells, meaning that the uncertainty is so vast that for every individual all the options seem reasonable to choose from.  

Therefore, if $\rho$ is small enough to disregard overlap of danger zones from different cells, the volume of the total danger zone is approximately (to leading order) given by $\rho$ times the total area of the internal boundaries ($\partial^{\rm int} D$).
For the special case of a single attribute (one dimensional space), the attribute space is an interval of the size length($\attrib$), and the Voronoi cells are segments within this interval. The boundaries are single points, so the total area of the boundaries is given directly by the number of alternatives $J$. We give an analytic formula for  $\ave{P_\rho}$ for the case of a small $\rho$ (namely, $\rho$ smaller or equal to half of the smallest segment  (See Supplementary Information Part 1 Proposition 2): 
  
 \begin{equation}\label{One dim small rho}
\ave{P_\rho}  =  1-\frac{J-1}2 \frac{\rho}{\length \attrib}.
 \end{equation}

For higher dimensions $\nattribs\geq 2$, we compute the match probability for the {\em first variation} of $\ave{P_\rho}$, that is for the slope at $\rho=0$, which is the  $v$ in the expansion $\ave{P_\rho}\sim 1- v\rho$. 
The notation $f(\rho)\sim g(\rho)$ as $\rho \to 0$  means $\lim_{\rho\to 0} f(\rho)/g(\rho)=1$.

In dimension $\nattribs \geq 2$, the mean probability for correct assignment $\ave{P_\rho}$ for $\rho$ small is
\begin{equation}\label{formula for small rho}
\ave{P_\rho} \sim 1- \left( \frac{c_{\nattribs}}{\Vol \attrib} \sum_j \Vol_{\nattribs-1}(\partial^{\rm int} D_j ) \right)\cdot \rho  ,\quad \rho \searrow 0  
\end{equation} 

where 
\begin{equation}\label{formula for c_K1}
c_{\nattribs}  = \frac 12 \frac{\Gamma \left(\frac{\nattribs}{2}+1\right)}{\sqrt{\pi } \Gamma \left(\frac{\nattribs+3}{2}\right)}  =
\begin{cases}
\frac 1{\pi} \frac{2^{2m}}{(m+1)\binom{2m+1}{m}},& \nattribs=2m\;{\rm even} \\
\\
  \frac{1}{2^{2m+2}}\binom{2m+1}{m}, &\nattribs=2m+1\;{\rm odd}. 
\end{cases}
\end{equation}

Here, $\Gamma$ is the Gamma function, thus $c_1=\frac1{4}$, $c_2=\frac2{3\pi}$, $c_3=\frac3{16}$, etc. See  Supplementary Information Part 1 Proposition 3 for the proof. When $\rho$ is large, we can no longer disregard the overlap of the different danger zones, and we rely on numerical calculation of equation \eqref{eq:average prob}.

Note that in the special case of ($\nattribs=1$), Eq~\eqref{formula for small rho} reduces to Eq~\eqref{One dim small rho} as $\ave{P_\rho}=1- (\frac 14\sum_j\#\partial^{\rm int} D_j ) \cdot \rho$, once we note that $c_K=\frac 14$. The boundary of an interior interval consists of $2$ points, so that $\#\partial^{\rm int} D_j=2$ for the $J-2$ interior intervals, and $\#\partial^{\rm int} D_j=1$ for the two intervals at the boundary of the space -  $j=1$, and $j=J$. 

Eq~\eqref{formula for small rho} reveals the dominance of the cell boundaries on the match probability. It predicts that as $\rho$ increases, $\ave{P_\rho}$ decreases linearly, with a slope that depends strongly on the length of the boundaries between the different Voronoi cells.

To extend the above analysis for the all values of $\rho$ we numerically calculate $\ave{P_\rho}$ for the two dimensional case. We represent the market as a two dimensional grid, with 6 alternatives located as shown in Figure \ref{fig:Voronoi} (the results are robust across location choices). We then execute three steps: first we assign for each grid point the best matched alternative. Second, we evaluate $P_\rho(x)$ by measuring the percentage of points having the same alternative in a sphere of radius $\rho$. Finally, we average $P_\rho(x)$ over the entire grid to obtain $\ave{P_\rho}$. 

Figure~\ref{fig:match_prob}a shows $P_\rho(x)$ for the setting described in Figure~\ref{fig:Voronoi}, for $\rho$ = 0.075. While most of the attribute space enjoys a perfect probability for a match, near the boundaries the probability decreases. Panel b describes $\ave{P_\rho}$ as a function of $\rho$ for the same market configuration, comparing the small $\rho$ approximation to numerical calculations. The slope of $\ave{P_\rho}$ vs. $\rho$ that is obtained from the approximation matches precisely the result of the  numerical simulation. Both analytical and numerical calculations show that the probability for a correct match rapidly decreases with $\rho$. While for the approximation, the decrease is linear, the numerical simulations show that for large values of $\rho$, the decrease is attenuated, saturating at $\sum_{j=1}^J (\Vol D_j)^2$.  

To illustrate the implications of the mismatch think of the opening example of choosing a school. In this setting, With $\rho=0.15$, ~20\% of the population will be dissatisfied, on average, with their choice, while there is another available school which matches their needs.

Note, that the matching in the above analysis is binary, that is, a mismatch happens when not assigning an individual with the true alternative, regardless of how the assigned alternative is close to the individual in the attribute space (this is an assumption in some of the literature on post-purchase evaluation e.g. Inman, Dyer and Jia (1997) \cite{IDJ}). In the Supplementary Information Part 2, we explore our results when the metric for the evaluation of the effect of uncertainty considers also the distance to the various alternatives.

\subsubsection*{Dependence on the number and distribution of alternatives}
The results shown in Figure \ref{fig:match_prob} provide an example for a specific configuration of six products. To assess the generalizability of this example we examined the effect of the number and distribution of the alternatives on the match probability. The slope of $\ave{P_\rho}$ where $\rho = 0$ serves as a useful metric, since it can be calculated directly from the length of boundaries. Higher slope indicates a stronger effect of the uncertainty on the match probability. Increasing the number of alternatives increases the slope - when more alternatives are available, the probability for a correct match decreases (see Figure~\ref{fig:sigma_and_N}a). This might seem counter-intuitive, as one would expect that more alternatives to choose from imply greater overall possibilities for a match. However, at the same time, more options mean more probability for a mismatch - as an individual is surrounded by more alternatives, he is less likely to choose the optimal one . In our terminology, we say that the ball of uncertainty intercepts with a larger number of Voronoi cells. Note, that there is a body of literature on the relationship between the number of alternatives during choice process, and the level of satisfaction and regret. Having more choice alternatives to choose from often increases the difficulty of the task and reduces satisfaction (e.g. Schwartz 2003 \cite{Schwartz}; Haynes 2009 \cite{Hayes}).

We further use the numerical simulations to explore how the distribution of the alternatives in the attribute space affects the match probability. Assume the location of the alternatives is drawn from a trimmed Gaussian distribution with width $\sigma$. Figure~\ref{fig:sigma_and_N}b presents the slope  $-d\ave{P_\rho} / d\rho\Big|_{\rho=0}$ vs. $\sigma$ for a market with 6 alternatives.  Increasing the width of the distribution increases the slope, thus reducing the probability for a correct match. The limiting case of uniform distribution has the lowest probability for a match (see Figure~\ref{fig:sigma_and_N}b). The intuition behind this is that the more dense the alternatives are, they are more similar to each other, meaning that the effective number of real alternatives is small, which, as illustrated in panel a, implies a higher match probability.  

\subsection*{Allocating service representatives} 
The results described above indicate that under uncertainty in preferences, mismatches are very likely to occur and can affect a considerable portion of the population, which creates a challenge for the central planner. As explained above, the authorities often employ service representatives (reps, hereafter), which assist individuals in understanding their true needs through personal meetings. Thus, the central planner wishes to improve $\ave{P_\rho} $ by introducing meetings with reps, which once having met with an individual, improve the individual's uncertainty from $\rho$ to a lower value  $\rho_l<\rho$.
Same as with the original uncertainty ball, our formulation practically allows heterogeneity in the amount of improvement:  after the meeting with the service rep, a new perceived location is drawn, within a smaller radius $\rho_l$. The actual amount of improvement will naturally vary for each individual and each dimension.

Due to budget constraints these reps meet only a fraction $b$ of the total population of individuals. We therefore ask who are the individuals which, within a given budget, should receive assistance from a rep in a way that will maximize the number of individuals who find their best matching alternative.

When the reps are allocated randomly, the new expected probability of correct assignment is   
\[ 
(1-b)\ave{P_{\rho}}+b\ave{P_{\rho_l}}
\]
Therefore, if we fix $\rho_l$ and $\rho$, and assuming that reps are randomly assigned to the population, increasing the proportion $b$ of reps results in a {\em linear} increase of  the expected probability of correct assignment.

We now check whether the central planner can improve the effectiveness of the reps by assigning them to specific individuals. To find the optimal assignment of service reps we define the local increase in match probability obtained from assigning a service rep to location $x$ to be  $\Delta(x,\rho,\rho_l) = P_{\rho_l}(x)- P_{\rho}(x)$.  Next, we choose $bN$ grid points, where $N$ is the total number of points on the grid, that have the maximal value of $\Delta(x,\rho,\rho_l)$, and reduce the uncertainty at these points to be $\rho_l$. Finally, to calculate the improvement in the match probability obtained from this process, we average $P_\rho(x)$ over the entire grid. We note that this optimal allocation scheme uses the true location $x$ of each individual, since we want to find the optimal allocation and spot the individuals who will have the maximum benefit from the service reps. In practice, as we stated above, $x$ is not known to the central planner, and thus, the central planner's implementation will be approximate, having its own error. We do not deal with such implementation error, but rather find the allocation which sets an upper limit to the benefit of the use of service reps.

Figure \ref{fig:service_reps} describes the overall improvement in $\ave{P_\rho}$ for various budget values $b$, where a budget is measured as the overall proportion of available rep meetings for the entire population. Panel a illustrates the areas which found to be optimal for receiving a meeting with the rep, within a budget $b=0.2$, for $\rho_l=0.05$ and $\rho=0.3$. We see that the places for optimal allocation (in blue), are those that are close to the boundaries between the Voronoi cells (white), but are not directly on the boundaries. When the distance from the boundary is smaller than $\rho_l$, meeting a rep will not significantly increase $\ave{P_\rho}$.
Panel b presents  $\ave{P_\rho}$ as a function of the budget $b$. While with random allocation, the improvement is linear with the budget, with the optimal allocation the curve shows a diminishing return and saturation at $b\approx 0.7$, meaning that the gain from allocating a service rep decreases as the number of allocated reps increases. 

To further demonstrate the effectiveness of service reps, we compare two ways to increase the match probability: the first is allocation reps as discussed, and the second is reducing the overall uncertainty of the population through means such as educational or citizen involvement programs.  Panel c shows, for each budget, what is the uncertainty $\rho$ that is equivalent to $b$ percentage of the population meetings with reps. 
A budget that allows meeting reps for 20\% of the population increases   $\ave{P_\rho}$ from 0.8 to 0.88, which is equivalent to reducing $\rho$ for the entire population from 0.3 to 0.18. While in practice such a change in the entire population might require long term educational and citizen involvement programs, the same result could be obtained by providing a relatively simple, easy to operate, front-desk service to a pre-targeted population. 

\section*{Discussion}
This paper deals with the overall impact of decisions, when individuals choose between alternatives, but have uncertainty as to the level of attributes that match their preferences.  

We add to previous literature by suggesting a continuous measure for the probability of a correct match, in a modeling framework that considers the entire set of alternatives, attributes, and individuals, and can help central planners in designing their policies.  We describe the attribute space as a Voronoi tessellation and use rigorous analysis and numerical simulations to describe the probability for correct match in space as a function of the uncertainty, and to calculate the average percentage of matches. We find that the overall mismatch can be considerable even for low levels of uncertainty, and thus can be a concern for policy makers. We further explore a practice often used by central planner - allocating service representatives to help individuals obtain the correct decisions. We use numerical simulations to show that within a given budget, the allocation is most effective for individuals whose preferences are at a certain distance from the boundaries of a Voronoi cell - not too deep in the cell, but yet not too close to a boundary. 
 
This paper suggests several avenues for future research. First, one could re-examine our assumption on a uniform distribution of the population in the attribute space. Other distributions, such as bell-shaped distribution around a central value might diminish the impact of uncertainty (if, for example, there are several clusters of individuals and a single alternative is placed in the middle of each cluster), or alternatively enhance it (if preferences are centered around certain values, but the alternatives are scattered in space). An additional extension could be exploring the issue of capacity constraints - the scenario in which a mismatch could prevent {\bf another} individual from being correctly matched. A third topic of interest would be endogenous sources of information, beside the reps, such as word-of-mouth from other users. Since this additional information also has uncertainty, it can hypothetically work in both directions and its influence on the reps allocation is not trivial.

\section*{Acknowledgements}
T.D. is grateful to the Azrieli Foundation for Azrieli Fellowships and is supported by a quantum science and technologies fellowship given by the Israeli council for higher education.
R.P was supported by the Israeli Science Foundation and by the KMart foundation of the Hebrew University. 
Z.R. was supported  by an Advanced Grant from the European Research Council under the European Union's Horizon 2020 research and innovation programme/ERC grant agreement n$^{\text{o}}$~786758.
The authors thank Eliya Horn for her research assistance. 

\section*{Author contributions statement}

T.D., R.P., and Z.R. worked jointly and contributed equally to the paper. All authors reviewed the manuscript. 

\section*{Additional information}
\textbf{Competing interests} The authors declare no competing interests.

\begin{figure}[ht]
\begin{center}
\includegraphics[width = \textwidth]{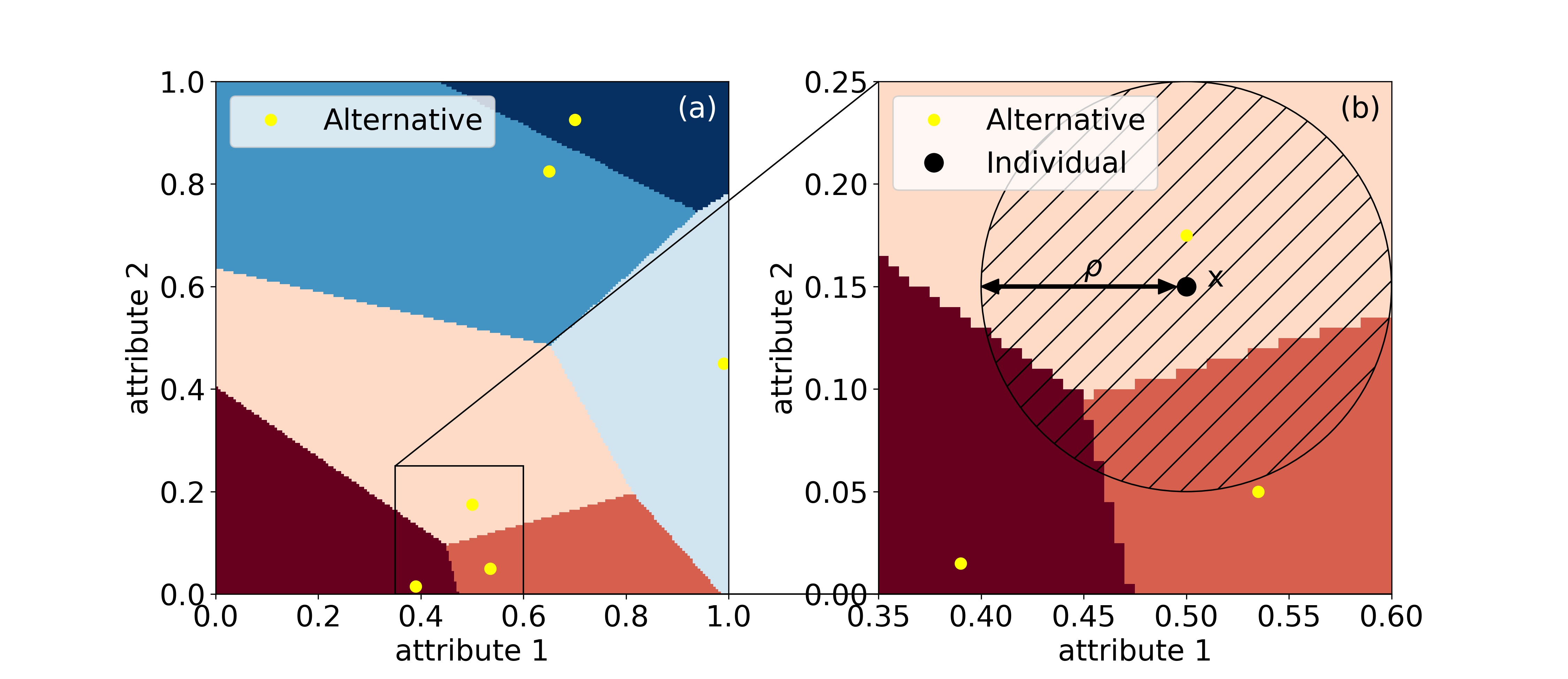}
\caption{\textbf{Voronoi tessellation. a.} An example for a two dimensional square $[0,1]^2$ of side length $1$, where 6 alternatives (yellow) divide the area to distinct Voronoi cells. \textbf{b.} In this example, $\rho = 0.1$. The probability $P_\rho(x)$ that an individual $x$ (marked by the black dot) chose the correct Voronoi cell is the relative area of the part of the ball of radius $\rho$ around $x$  which lies in the same Voronoi cell as $x$. }
\label{fig:Voronoi}
\end{center}
\end{figure}

\begin{figure}[ht]
\begin{center}
\includegraphics[width = \textwidth]{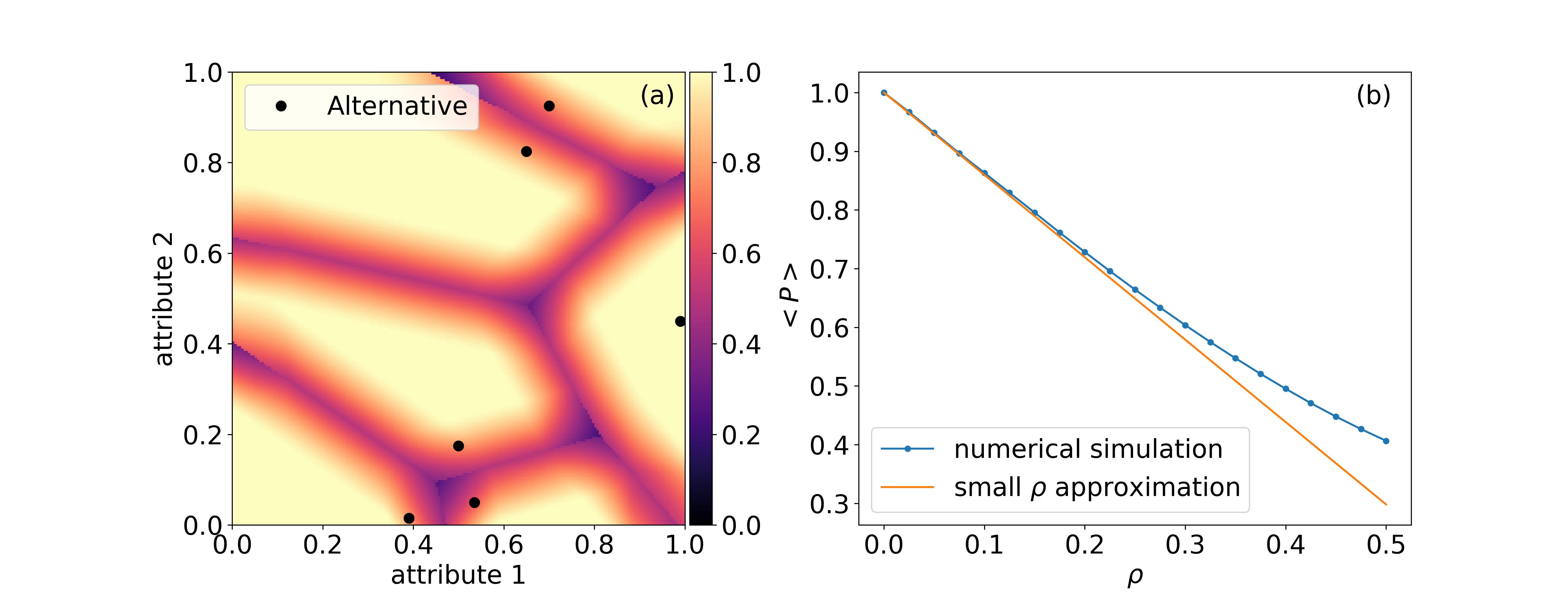}
\caption{\textbf{Match probabilities. a.} The local probability for a match, $P_\rho (x)$, is plotted as a color map for the example shown in Figure.~\ref{fig:Voronoi}.  \textbf{b.}   Average  probability for a match $\ave{P_\rho}$ as a function of $\rho$ for this configuration. Displayed is a comparison between the small $\rho$ linear approximation and the numerical calculation.  The probability for correct match rapidly decreases with $\rho$ and the decrease is attenuated for large values of $\rho$, until saturation.
}
\label{fig:match_prob}
\end{center}
\end{figure}

\begin{figure}[ht]
\begin{center}
\includegraphics[width = \textwidth]{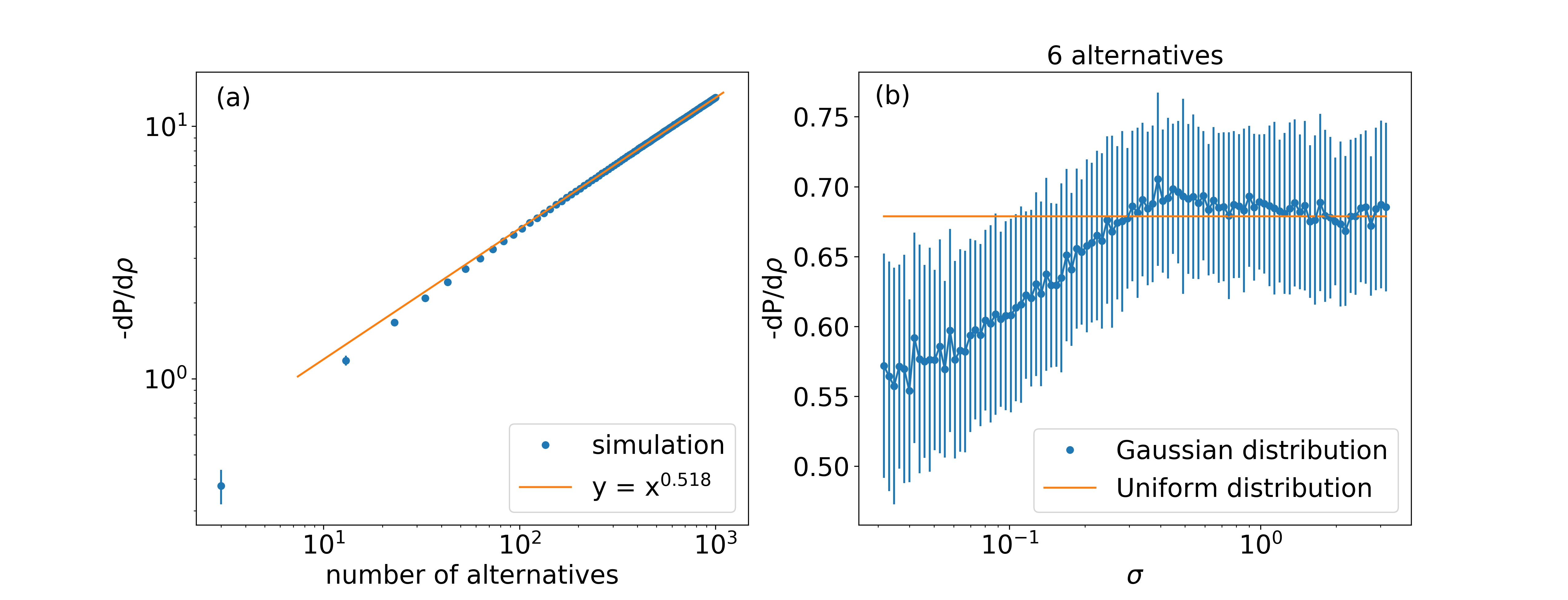}
\caption{\textbf{Effects of the distribution and the number of alternatives on the probability for a match a.} $ -d\ave{P(\rho=0)} / d\rho$ vs. the number of alternatives. For each number of alternatives we generated 100 market configurations sampled from a uniform distribution. For each configuration we calculated the length of the boundaries between the resulting Voronoi cells and used equation \ref{formula for small rho} to compute $d\ave{P(\rho=0)} / d\rho$. We present the average value of the different configurations. The error bar shows the standard deviation. \textbf{b.} Dependence on the distribution of alternatives: $ -d\ave{P(\rho=0)} / d\rho$ vs. $\sigma$, where $\sigma$ is the width of a trimmed Gaussian distribution, from which the location of alternatives is sampled. The simulation procedure is similar to panel (a). We vary $\sigma$ for the case of 6 alternatives (panel b)}
\label{fig:sigma_and_N}
\end{center}
\end{figure}

\begin{figure}[ht]
\begin{center}
\includegraphics[width = \textwidth]{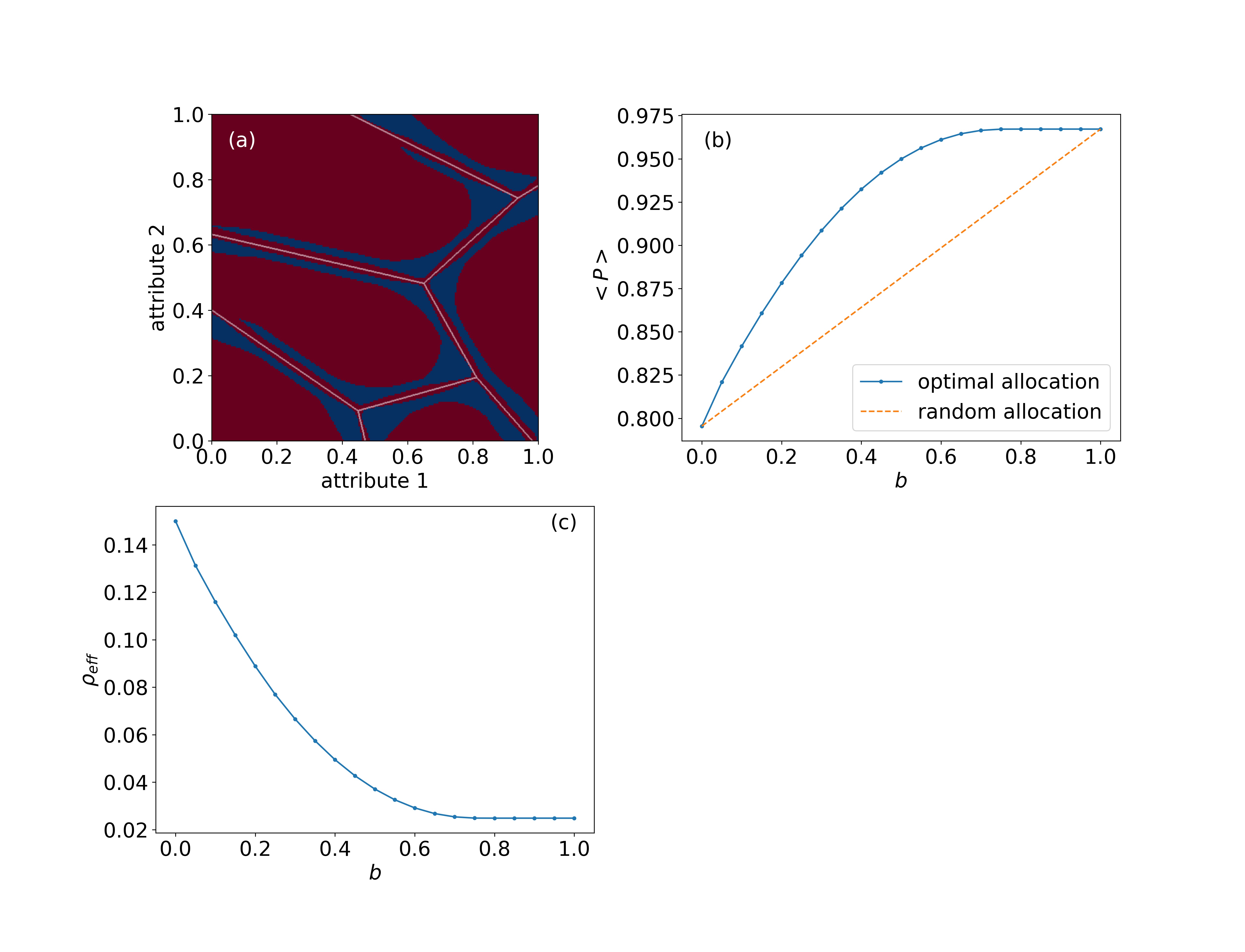}
\caption{\textbf{Improving match probabilities using service reps a.} Areas which maximize the effectiveness of service reps (blue), within a budget $b=0.2$, for $\rho_l=0.025$ and $\rho=0.15$   \textbf{b.}   Average  probability for a match $\ave{P_\rho}$ as a function of the budget $b$, where $\rho_l=0.025$ and $\rho=0.15$. The orange line shows a linear improvement when the reps are assigned randomly. The blue dots show maximal improvement when the reps are allocated optimally. \textbf{c.} The equivalent overall $\rho$ which results in the same $\ave{P_\rho}$ as an optimal allocation of reps within a given budget $b$.  }
\label{fig:service_reps}
\end{center}
\end{figure}

\newpage
 
\appendix

\section{Proof of the formula for the first variation of the expected probability of correct assignment}

% We give a proof of the formula for the first variation of the expected probability of correct assignment. 

We start with a space of attributes $\attrib$ which is a $\nattribs$-dimensional box, $\attrib=[a_1,b_1]\times[a_2,b_2]\times[a_\nattribs, b_\nattribs]$ with side-lengths $b_i-a_i$. We are given partition of the space of attributes $\attrib$   into   Voronoi cells with disjoint interiors: $\attrib= \coprod_{j=1}^J D_j$. The cells are convex polytopes, so that the boundary of each cell is covered by finitely many hyperplanes. %In the application at hand, this is the Voronoi tessellation into basins of attraction. 

We fix an uncertainty factor, a ball of radius  $\rho>0$, and for a point $x\in \attrib$ we ask what is the probability $P_\rho(x)$ that we assign  the correct  Voronoi cell,  given this uncertainty factor $\rho$? That is, given that $x\in D_j$, what is the probability  
that we assign $D_j$ as the basin of attraction, using an error bar of $\rho$?  Note that the problem only makes sense for small $\rho$, because once $\rho$ is sufficiently large so that the ball $B(x,\rho)$ covers all of $\attrib$, say $\rho>\rho_{\max}$, then the question is independent of $\rho$.

 We can write a formula for the expected value $\ave{P_\rho} $ of $P_\rho(x)$ (as we average over $x$): 
\begin{prop}\label{prop:gen formula}
%For $\rho<\rho_{\max}$,  
\begin{equation}\label{gen formula}
\ave{P_\rho} = \frac 1{\Vol \attrib} \sum_j  \int_{D_j} \frac{\Vol\Big(D_j\cap B(x,\rho)\Big)}{\Vol B(x,\rho)}dx 
\end{equation}
where $B(x, \rho)$ is the ball around $x$ of radius $\rho$.  
%and $\chi_\rho(x, y)=1$ if $\dist(x,y)\leq \rho$, and $\chi_\rho(x,y)=0$ otherwise.  

If $\rho>\rho_{\max}$, then $\ave{P_\rho} $ saturates at 
$\ave{P_\rho} = \sum_{j=1}^J ( \Vol D_j )^2/ (\Vol  \attrib)^2$. 
\end{prop}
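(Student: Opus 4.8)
The plan is to handle the two assertions separately; neither requires more than unwinding the definitions. For the identity \eqref{gen formula} I would read $\ave{P_\rho}$ straight off the probabilistic meaning of $P_\rho(x)$ and split the integral over the cells; for the saturation value I would evaluate that same integral in the regime where every uncertainty ball already contains the whole domain.

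\textbf{Step 1: the formula.} Recall the model: an individual with true location $x$ draws a perceived location $y$ uniformly at random from $B(x,\rho)$, conditioned to lie in the admissible region $\attrib$ (equivalently, by rejection sampling), and is then assigned to whichever cell $D_j$ contains $y$. If $x\in D_j$ the assignment is correct precisely when $y\in D_j$, so $P_\rho(x)=\Vol\!\big(D_j\cap B(x,\rho)\big)\big/\Vol\!\big(B(x,\rho)\cap\attrib\big)$ for $x\in D_j$ — the numerator would be $\Vol(D_j\cap B(x,\rho)\cap\attrib)$, but $D_j\subseteq\attrib$ — and this reduces to \eqref{def of Prho rev} whenever $B(x,\rho)\subseteq\attrib$. \textbf{Step 2.} Average over $x$ uniform on $\attrib$: since the cells have disjoint interiors with $\attrib=\bigcup_j D_j$, their indicators sum to $1$ off the finite union of the bounding hyperplanes, a null set, so $\int_\attrib P_\rho(x)\,dx=\sum_j\int_{D_j}P_\rho(x)\,dx$, which is exactly \eqref{gen formula}. (Measurability is free: $x\mapsto\Vol(D_j\cap B(x,\rho))$ and $x\mapsto\Vol(B(x,\rho)\cap\attrib)$ are continuous, being $\mathbf 1_{D_j}$ and $\mathbf 1_\attrib$ convolved with $\mathbf 1_{B(0,\rho)}$, and the denominator is everywhere strictly positive.)

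\textbf{Step 3: saturation.} Take $\rho_{\max}=\operatorname{diam}\attrib$ — for the box, $\rho_{\max}=\big(\sum_i (b_i-a_i)^2\big)^{1/2}$ — so that for $\rho>\rho_{\max}$ and every $x\in\attrib$ one has $\attrib\subseteq B(x,\rho)$; then $B(x,\rho)\cap\attrib=\attrib$ and $D_j\cap B(x,\rho)=D_j$, hence $P_\rho(x)=\Vol D_j/\Vol\attrib$ for all $x\in D_j$, independent of $x$. \textbf{Step 4.} Feeding this constant into \eqref{gen formula} gives $\int_{D_j}P_\rho=(\Vol D_j)^2/\Vol\attrib$, and summing over $j$ and dividing by $\Vol\attrib$ once more yields $\ave{P_\rho}=\sum_{j=1}^J(\Vol D_j)^2/(\Vol\attrib)^2$.

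\textbf{Main obstacle.} There is no substantial obstacle; the single point to get right — and the reason the large-$\rho$ limit is $\sum_j(\Vol D_j)^2/(\Vol\attrib)^2$ rather than $0$ — is the normalization near $\partial\attrib$: because the perceived location must stay in $\attrib$, the ball is normalized by $\Vol(B(x,\rho)\cap\attrib)$, which coincides with $\Vol B(x,\rho)$ only while the ball fits inside the domain. Keeping that bookkeeping consistent, and noting that the cell boundaries form a null set so the split over $j$ is exact, is all the proof needs.
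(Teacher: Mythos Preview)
Your proof is correct and follows essentially the same route as the paper's: split the average over the cells using the pointwise definition of $P_\rho(x)$, and for the saturation value take $\rho$ larger than $\operatorname{diam}\attrib$ so that each (truncated) ball becomes all of $\attrib$ and $P_\rho(x)=\Vol D_j/\Vol\attrib$ on $D_j$. Your explicit remark that the denominator should be read as $\Vol(B(x,\rho)\cap\attrib)$ is a useful clarification of a convention the paper leaves implicit (cf.\ its one-dimensional computation, where the boundary ``balls'' are truncated to $[0,L]$).
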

\begin{proof}
 To see \eqref{gen formula}, recall that  given that $x\in D_j$,   the probability $P_\rho(x)$ %$p(D_j,\rho)(x)$ 
that we select $D_j$ as the basin of attraction  is  the relative area of the intersection of the ball of radius $\rho$ with the cell $D_j$: 
\[
P_\rho(x)
%p(D_j,\rho)(x)=
% \Prob(x\in B(x,\rho)\cap D_j) \Big| x\in D_j) = 
=\begin{cases}\frac{\Vol\left(D_j\cap B(x,\rho)\right)}{\Vol B(x,\rho)},&x\in D_j\\   \\
0,&x\notin D_j . \end{cases}
\]

We want to compute the expected value of $P_\rho(x)$ (we average over $x$):
\[
%\begin{split}
\ave{P_\rho} %&=\sum_j \E(p(D_j,\rho)(x)  ) 
%\\&\mbox{where }\E(\bullet |x\in D_j) \mbox{ denotes the conditional expectation}
%\\&
= \frac 1{\Vol \attrib} \sum_j  \int_{D_j}P_\rho(x) dx
 =\sum_j \frac 1{\Vol \attrib}  \int_{D_j} \frac{\Vol\Big(D_j\cap B(x,\rho)\Big)}{\Vol B(x,\rho)}dx .
%\end{split}
\]

To see the saturation value, take $\rho$ to be larger than the diameter of the space of attributes $\attrib$, so that for each point $x$, the ball $B(x,\rho)$ coincides with all of  $\attrib$. %Applying Formula~\eqref{gen formula}
Then for each $x\in \attrib$, the intersection $B(x,\rho)\cap D_j=D_j$, and we can compute $\ave{P_\rho}$ simply as a conditional expectation, by writing
\[
P_\rho(x) = \sum_{j=1}^J \mathbf 1_{D_j}(x) \frac{\Vol D_j}{\Vol  \attrib}
\]
and then 
\[
%\begin{split}
\ave{P_\rho}  %&
= \frac {\int_{\attrib}P_\rho(x) dx }{\Vol   \attrib}
 =  \frac 1{\Vol   \attrib}\sum_{j=1}^J \int_{  \attrib }\mathbf 1_{D_j}(x) \frac{\Vol D_j}{\Vol   \attrib}dx 
%\\&  
= \sum_{j=1}^J \Big( \frac{\Vol D_j}{\Vol  \attrib}\Big)^2
% \end{split}
\]
as claimed. 
\end{proof}

\subsection{The one dimensional case}
Equation \eqref{gen formula} makes sense in any dimension,   
but it is only in dimension $\nattribs=1$, when the space of attributes  is an interval $\mathcal A=[0,L]$, that we know how to extract an exact expression from it for small $\rho$.   
  
 \begin{prop}\label{prop:One dim box}
 For $\nattribs=1$, and $\rho<\tfrac 12  \min_j(a_{j+1}-a_j)$, 
 \[
 \ave{P_\rho}  =  1-\frac{J-1}2 \frac{\rho}{\length \attrib} .
 \] 
 \end{prop}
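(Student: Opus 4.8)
The plan is to pass to the complementary quantity $1-\ave{P_\rho}$ and show it localizes near the $J-1$ interior cell boundaries, where it is governed by an elementary triangle integral. Write $\attrib=[0,L]$ with $L=\length\attrib$, and list the Voronoi cells as consecutive subintervals $D_j=[c_{j-1},c_j]$, $0=c_0<c_1<\dots<c_J=L$, with lengths $\ell_j=c_j-c_{j-1}$; the hypothesis says $\rho<\tfrac12\min_j\ell_j$, so $\ell_j>2\rho$ for every $j$, and the interior boundary points are $c_1,\dots,c_{J-1}$. Starting from \eqref{gen formula} of Proposition~\ref{prop:gen formula}, I would rewrite
\[
1-\ave{P_\rho}=\frac1L\sum_{j=1}^J\int_{D_j}\bigl(1-P_\rho(x)\bigr)\,dx,
\]
so the whole computation reduces to integrating the local mismatch density $1-P_\rho(x)$.

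First I would pin down the support of $1-P_\rho$. For $x\in D_j$ at distance $\geq\rho$ from both endpoints of $D_j$, the ball $B(x,\rho)$ — a subinterval of $\attrib$ of length $2\rho$ — lies inside $D_j$, so $P_\rho(x)=1$; using $\ell_1,\ell_J>2\rho$, this also covers points within distance $\rho$ of the outer boundary $\{0,L\}$, whose clipped ball still lies in $D_1$ (resp.\ $D_J$). Hence $\partial\attrib$ contributes nothing, which is exactly why only $\partial^{\rm int}D_j$ enters. Therefore $1-P_\rho$ is supported on the "danger zones" $Z_k=[c_k-\rho,c_k+\rho]$, $k=1,\dots,J-1$; the bound $\rho<\tfrac12\min_j\ell_j$ makes the $Z_k$ pairwise disjoint and each contained in $\attrib$, and on $Z_k$ the ball $B(x,\rho)=[x-\rho,x+\rho]$ straddles the single boundary $c_k$.

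Next I would evaluate one danger zone. For $x\in[c_k-\rho,c_k]\subset D_k$ one has $D_k\cap B(x,\rho)=[x-\rho,c_k]$, so $P_\rho(x)=\dfrac{c_k-x+\rho}{2\rho}$ and $1-P_\rho(x)=\dfrac{x-(c_k-\rho)}{2\rho}$, a linear ramp from $0$ to $\tfrac12$; substituting $t=x-(c_k-\rho)$ gives $\int_{c_k-\rho}^{c_k}(1-P_\rho)=\int_0^\rho\frac{t}{2\rho}\,dt=\frac\rho4$. The half $[c_k,c_k+\rho]\subset D_{k+1}$ is symmetric and also gives $\frac\rho4$, so each interior boundary contributes $\frac\rho2$ to $\sum_j\int_{D_j}(1-P_\rho)$. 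Summing over the $J-1$ interior boundaries yields $\sum_j\int_{D_j}(1-P_\rho(x))\,dx=\frac{(J-1)\rho}{2}$, and dividing by $L$ gives $1-\ave{P_\rho}=\frac{J-1}{2}\frac{\rho}{\length\attrib}$, i.e.\ the claimed identity.

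There is no genuine analytic difficulty here — note in fact that, unlike the higher-dimensional case, this is an \emph{exact} equality rather than a leading-order asymptotic, because in one dimension the mismatch density is exactly piecewise linear and the danger zones have no overlap. The only step requiring care is the bookkeeping that turns the hypothesis $\rho<\tfrac12\min_j\ell_j$ into the clean picture "every point lies in the danger zone of at most one interior boundary, on each danger zone $B(x,\rho)$ is a full interval of length $2\rho$ inside $\attrib$, and $\partial\attrib$ contributes nothing." Once that decomposition is justified, the remaining integral is the area of a triangle.
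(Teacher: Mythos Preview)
Your proof is correct and follows essentially the same approach as the paper: both localize the computation to the $\rho$-neighborhoods of the $J-1$ interior boundary points and evaluate the same linear integrand there, with the outer boundary contributing nothing because the clipped ball stays inside its cell. Your passage to the complement $1-P_\rho$ makes the bookkeeping slightly tidier than the paper's direct evaluation of $\int_{D_j}P_\rho$, but the decomposition and the underlying triangle integral are identical.
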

\bigskip

\begin{proof}
 
We use equation~\eqref{gen formula}: %Take $\attrib = [0,L]$, $L=\length \attrib$. 
In this one-dimensional case, the Voronoi cells are intervals $D_j=[a_{j},a_{j+1}]$, with $0=a_1<a_2<\dots  <a_{J+1}=L$. We assume that 
\[
2\rho< \min_{j} {\rm length} D_j = \min_j(a_{j+1}-a_j) . 
\]

To compute the contribution of each cell $D_j=[a_j,a_{j+1}]$, divide the region of integration into an interior region $D_j^{\rm int}:=[a_j+\rho,a_{j+1}-\rho]$ and two boundary regions $[a_j,a_j+\rho]$ and $[a_{j+1}-\rho,a_{j+1}]$. 

For $x$ in  the interior region, we have $B(x,\rho)\subset D_j$ so that $D_j\cap B(x,\rho) = B(x,\rho)$ and hence 
\[
\begin{split}
\int_{D_j^{\rm int}} \frac{\length\Big(D_j\cap B(x,\rho)\Big)}{\length B(x,\rho)}dx  &= \int_{D_j^{\rm int}} 1 dx 
= \length (D_j^{\rm int}) \\
&= (a_{j+1}-\rho)-(a_j+\rho)
\\
&=a_{j+1}-a_j-2\rho = \length(D_j)-2\rho .
\end{split}
\]

To compute the contribution of the boundary components, note that there are two types, corresponding if they coincide with the boundary of the interval, namely $j=1,J+1$, or not ($j=2,\dots,J$). 
\begin{figure}[ht]
\begin{center}
\includegraphics[height=40mm]{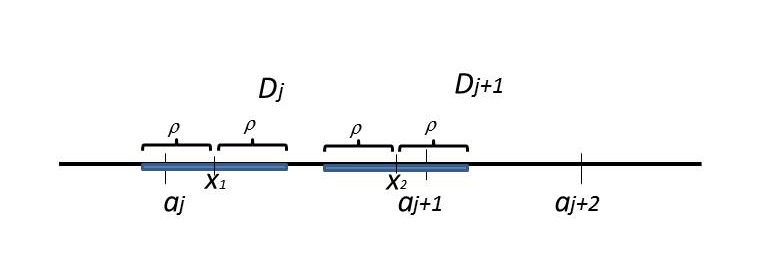}
\caption{ The overlap of the ``balls'' $B(x,\rho)=[x-\rho,x+\rho]$ with the Voronoi intervals $D_j=[a_j,a_{j+1}]$.}
\label{fig:intervals}
\end{center}
\end{figure}

For components which do not intersect the boundary, namely if $j\neq 1,J+1$ then for all $x \in [a_j,a_j +\rho] \cup [a_{j+1} - \rho , a_{j+1}]$, the ``ball" $B(x,\rho) = [x-\rho,x+\rho]$ has length $2\rho$, but 
\[
B(x,\rho)\cap D_j = [x-\rho,x+\rho]\cap [a_j,a_{j+1}] = \begin{cases} [a_j,x+ \rho], & a_j\leq x\leq a_j+\rho  \\ [x-\rho,a_{j+1}], & a_{j+1}-\rho\leq x\leq a_{j+1} \end{cases}
\]
(see Supplementary Figure~\ref{fig:intervals}) and 
%\marginpar{add figure?}
\[
\length B(x,\rho)\cap D_j  = \begin{cases}  x+\rho-a_j , & a_j\leq x\leq a_j+\rho 
\\ a_{j+1}+\rho-x , & a_{j+1}-\rho\leq x\leq a_{j+1}
\end{cases}
\]
so that 
\[
\int_{a_j}^{a_j+\rho}  \frac{\length\Big(D_j\cap B(x,\rho)\Big)}{\length B(x,\rho)}dx  = \int_{a_j}^{a_j+\rho} \frac{x+\rho-a_j}{2\rho}dx = \frac 34 \rho
\]
and
\[
\int_{a_{j+1}-\rho}^{a_{j+1}} \frac{\length\Big(D_j\cap B(x,\rho)\Big)}{\length B(x,\rho)}dx  =\int_{a_{j+1}-\rho}^{a_{j+1}} \frac{a_{j+1}+\rho-x}{2\rho}dx = \frac 34 \rho.
\]
Altogether, we obtain for $ j\neq 1,J $ that 
\[
\int_{D_j} p_\rho(D_j)(x)dx  =   \length(D_j)-2\rho  + \frac 34 \rho+\frac 34 \rho = \length D_j-\frac 12 \rho. 
\]

For components which do intersect the boundary, that is for $D_1 = [0,a_2]$ or $D_J=[a_J,L]$, we have 
\[
B(x,\rho)=B(x,\rho)\cap D_1  = [0,x+\rho],\quad x\in D_1,
\]
and 
\[
B(x,\rho)= B(x,\rho)\cap D_J=[x-\rho,L] , \quad x\in D_J
\]
so that 
\[
\frac{\length B(x,\rho)\cap D_j}{\length B(x,\rho)} = 1, \quad x\in D_1\cup D_J
\]
and we get  a contribution of 
\[
\int_0^\rho 1dx = \rho = \int_{L-\rho}^L 1 dx . 
\]
Therefore, 
\[
\int_{D_j} P_\rho(x)dx  =  \length D_j -2\rho + \frac 34 \rho+\rho = \length D_j-\frac 14 \rho, \quad j=1,J .
\]

Altogether we find
\[
\begin{split}
\ave{P_\rho} &=\frac 1{\length \attrib}  \sum_{j=1}^J \int_{D_j} P_\rho(x)dx  
\\
&= \frac 1{\length \attrib}  \Big( \length D_1 -\frac 14 \rho + \sum_{j=2}^{J-1} \left( \length D_j -\frac 12 \rho\right) + \length D_J -\frac 14 \rho   \Big) 
\\&= \frac 1{\length \attrib}   \Big( \sum_{j=1}^J \length D_j - \frac{J-1}{2}\rho \Big)   = 1- \frac{J-1}{2}\frac {\rho }{\length \attrib}
\end{split}
\]
as claimed.
\end{proof}

\subsection{ Higher dimensions $\nattribs\geq 2$}  
We now pass to the higher dimensional case $\nattribs\geq 2$. Our goal in this section is to obtain an exact formula for the {\em first variation} of $\ave{P_\rho}$, that is for the slope at $\rho=0$. 
 
% We denote by 
%\[ 
%\ave{f(z)}_{z\in B(\rho)}:=\frac 1{\Vol B(\rho)}\int_{B(\rho)} f(z)dz
%\]
%the average of a function on the ball $B(\rho)=B(0,\rho)$. Let 
%\begin{equation}
%c_{\nattribs} = \frac 12 \ave{|z_1|}_{z\in B(1)}
%\end{equation}
%(the average over the unit ball).  We shall show (see Lemma~\ref{lem:ave |z|}) that

For each Voronoi cell $D_j$, we denote by $\partial^{\rm int} D_j$ the part of the boundary of $D_j$ which does not lie on the boundary of the box (the space of attributes)  $\attrib$.

 \begin{prop}\label{prop:expand near 0} 
In dimension $\nattribs \geq 2$, the mean probability for correct assignment $\ave{P_\rho}$ for $\rho$ small is 
\[
\ave{P_\rho} \sim 1- \left( \frac{c_{\nattribs}}{\Vol \attrib} \sum_j \Vol_{\nattribs-1}(\partial^{\rm int} D_j ) \right)\cdot \rho  ,\quad \rho \searrow 0  
\]
where 
\begin{equation}\label{formula for c_K}
c_{\nattribs}  = \frac 12 \frac{\Gamma \left(\frac{\nattribs}{2}+1\right)}{\sqrt{\pi } \Gamma \left(\frac{\nattribs+3}{2}\right)}  =
\begin{cases}
\frac 1{\pi} \frac{2^{2m}}{(m+1)\binom{2m+1}{m}},& \nattribs=2m\;{\rm even} \\
\\
  \frac{1}{2^{2m+2}}\binom{2m+1}{m}, &\nattribs=2m+1\;{\rm odd}. 
\end{cases}
\end{equation}
%Thus $c_1=\frac 14$, $c_2=\frac {2}{3\pi}$, $c_3 = \frac{3}{16}$, etcetera. 
% \begin{center}
% \begin{table}[h] 
% \begin{tabular}{ | l|  c|c|c|c|c|c|c|c|c | r |}
%    \hline
%    $\nattribs$ & $1$ & $2$ & $3$ & $4$ & $5$ &  $6$  & $7$  & $8 $ & $9$ & $ 10$ \\  
%     \hline   
%    \large    $c_\nattribs$ &    \large $\frac 14$ &  \large $\frac {2}{3 \pi}$ &  \large  $\frac {3}{16}$ &  \large  $\frac{8}{15 \pi}$ &  \large$\frac{5}{32}$ &  \large$\frac{16}{35\pi}$ & 
 %    \large$\frac{35}{256}$ &  \large$\frac{128}{315 \pi}$ &  \large$\frac{63}{512}$ & \large $\frac{256}{693 \pi}$ \\
 %   \hline
 %   
 %  \end{tabular}
%    \bigskip
 %  \caption{$c_\nattribs= \frac 12 \frac{\Gamma \left(\frac{\nattribs}{2}+1\right)}{\sqrt{\pi } \Gamma \left(\frac{\nattribs+3}{2}\right)}$} 
%\end{table}
%\end{center}

\end{prop}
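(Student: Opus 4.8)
The plan is to start from the exact identity \eqref{gen formula} of Proposition~\ref{prop:gen formula} and subtract $1$, which is legitimate since $\sum_j \Vol D_j = \Vol\attrib$. This recasts the claim as an asymptotic for the \emph{deficit}
\[
1-\ave{P_\rho} \;=\; \frac{1}{\Vol\attrib}\sum_j \int_{D_j} \frac{\Vol\big(B(x,\rho)\setminus D_j\big)}{\Vol B(x,\rho)}\,dx .
\]
The integrand lies in $[0,1]$ and vanishes as soon as $B(x,\rho)\subseteq D_j$, so the first move is to note that the contribution of cell $D_j$ is supported on the $\rho$-tube $T_j(\rho):=\{x\in D_j:\dist(x,\partial^{\rm int}D_j)<\rho\}$. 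Points whose ball protrudes only through $\partial\attrib$ contribute nothing, under the convention that the perceived location is sampled from $B(x,\rho)\cap\attrib$ (equivalently, that the Voronoi partition is extended to all of $\R^\nattribs$); this is why only $\partial^{\rm int}D_j$ enters the final formula.

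Next I would localize to individual facets. Since $D_j$ is a convex polytope, $\partial^{\rm int}D_j$ is a finite union of $(\nattribs-1)$-dimensional facets $F$, each lying on a hyperplane $H_F$ — a perpendicular bisector of $P_j$ and some other site — and $D_j$ sits entirely in one of the two closed half-spaces bounded by $H_F$. For $\rho$ small enough, $T_j(\rho)$ decomposes, up to a set of volume $O(\rho^2)$ lying over the $(\nattribs-2)$-skeleton of $\partial^{\rm int}D_j$ (the ridges where facets meet, and where facets meet $\partial\attrib$), into disjoint slabs, one over each facet $F$, coordinatized by a point of $F$ together with the distance $t\in(0,\rho)$ from $x$ to $H_F$. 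For $x$ in the slab over $F$ and away from the skeleton, the nearest point of $\partial D_j$ is the orthogonal projection of $x$ onto $H_F$, and $B(x,\rho)\setminus D_j$ is exactly the spherical cap given by the part of $B(x,\rho)$ on the far side of $H_F$: convexity of $D_j$ forces $D_j$ to miss that side entirely, while on the near side the whole ball stays inside $D_j$ because, off the skeleton, no other facet of $D_j$ comes within $2\rho$ of $x$. This gives, to leading order,
\[
\int_{D_j} \frac{\Vol(B(x,\rho)\setminus D_j)}{\Vol B(x,\rho)}\,dx \;=\; \sum_{F\subseteq\partial^{\rm int}D_j}\Vol_{\nattribs-1}(F)\int_0^\rho \frac{V_{\mathrm{cap}}(\rho,t)}{\omega_\nattribs\,\rho^\nattribs}\,dt \;+\; O(\rho^2),
\]
where $\omega_\nattribs=\Vol B(0,1)$ and $V_{\mathrm{cap}}(\rho,t)$ is the volume of the portion of a radius-$\rho$ ball lying to one side of a hyperplane at distance $t$ from the center.

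Then it remains to do a one-dimensional integral. Writing $V_{\mathrm{cap}}(\rho,t)=\Vol\{y\in B(0,\rho):y_\nattribs>t\}$ and exchanging the order of integration,
\[
\int_0^\rho V_{\mathrm{cap}}(\rho,t)\,dt \;=\; \int_{B(0,\rho)}(y_\nattribs)_+\,dy \;=\; \tfrac12\int_{B(0,\rho)}|y_\nattribs|\,dy \;=\; \omega_{\nattribs-1}\int_0^\rho s\,(\rho^2-s^2)^{(\nattribs-1)/2}\,ds \;=\; \frac{\omega_{\nattribs-1}}{\nattribs+1}\,\rho^{\nattribs+1},
\]
so the per-facet contribution is $\frac{\omega_{\nattribs-1}}{(\nattribs+1)\omega_\nattribs}\Vol_{\nattribs-1}(F)\,\rho$, and summing over all facets of all cells yields $1-\ave{P_\rho} = \big(\tfrac{\omega_{\nattribs-1}}{(\nattribs+1)\omega_\nattribs}\,\tfrac{1}{\Vol\attrib}\sum_j \Vol_{\nattribs-1}(\partial^{\rm int}D_j)\big)\rho + O(\rho^2)$. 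Thus $c_\nattribs = \omega_{\nattribs-1}/\big((\nattribs+1)\omega_\nattribs\big)$, and substituting $\omega_n=\pi^{n/2}/\Gamma(\tfrac{n}{2}+1)$ together with $\Gamma(\tfrac{\nattribs+3}{2})=\tfrac{\nattribs+1}{2}\,\Gamma(\tfrac{\nattribs+1}{2})$ rewrites this as $c_\nattribs=\tfrac12\,\Gamma(\tfrac{\nattribs}{2}+1)\big/\big(\sqrt\pi\,\Gamma(\tfrac{\nattribs+3}{2})\big)$; the even/odd closed forms and the numbers $c_1=\tfrac14$, $c_2=\tfrac2{3\pi}$, $c_3=\tfrac3{16}$ then fall out of the Legendre duplication formula for $\Gamma$.

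The cap integral and the $\Gamma$-function bookkeeping are routine; the step that will need real care is the geometric localization — showing that $T_j(\rho)$ genuinely splits into facet slabs with only an $O(\rho^2)$ error, uniformly over the (finite) Voronoi arrangement. Concretely one wants a positive $\rho_0$, depending only on the sites and the box $\attrib$, bounding below the separation between non-incident faces of the arrangement and the angles at which faces meet, so that for $\rho<\rho_0$ the nearest-point projection onto $\partial^{\rm int}D_j$ is single-valued off an $O(\rho)$-thin neighborhood of the $(\nattribs-2)$-skeleton and the ``harmless'' assertions above hold there. This is the standard positive-reach / tubular-neighborhood argument for a finite polytopal complex; since $\partial^{\rm int}D_j$ has finite $(\nattribs-1)$-volume, the excluded skeleton neighborhood contributes $O(\rho^2)=o(\rho)$, which is exactly what the $\sim$ statement requires. (When $\nattribs=1$ the ``facets'' are the boundary points, $\Vol_0$ counts them, and the whole computation collapses to Proposition~\ref{prop:One dim box}, consistent with $c_1=\tfrac14$.)
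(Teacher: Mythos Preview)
Your argument is correct and follows the same geometric skeleton as the paper's proof: both localize the deficit to a $\rho$-tube around $\partial^{\rm int}D_j$, peel off an $O(\rho^2)$ contribution near the $(\nattribs-2)$-skeleton and near $\partial\attrib$, and reduce to a single cap integral over one facet. The paper organizes this as a Type~I/Type~II split and states the Type~II contribution as a separate proposition; your recasting via the deficit $1-\ave{P_\rho}=\frac{1}{\Vol\attrib}\sum_j\int_{D_j}\Vol(B(x,\rho)\setminus D_j)/\Vol B(x,\rho)\,dx$ is equivalent but tidier, since it avoids tracking the Type~I volume separately.

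The one genuine difference is the evaluation of the constant. The paper computes $\int_0^\rho A(x_\nattribs)\,dx_\nattribs$ by writing out the cap volume, switching the order of integration, and then passing to spherical coordinates in $\R^{\nattribs-1}$ and $\R^{\nattribs}$ to cancel a product of sine integrals; this yields $c_\nattribs$ only after some Gamma-function massage. Your route---Fubini to get $\int_0^\rho V_{\rm cap}(\rho,t)\,dt=\int_{B(0,\rho)}(y_\nattribs)_+\,dy$, then a single slice integral---gives the closed form $c_\nattribs=\omega_{\nattribs-1}/\big((\nattribs+1)\omega_\nattribs\big)$ immediately, from which the $\Gamma$-expression and the even/odd cases follow by substitution. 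This is shorter and more transparent than the paper's computation, and has the pleasant by-product of exhibiting $c_\nattribs$ as a ratio of ball volumes.
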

\begin{proof}
We start by using equation \eqref{gen formula}
\begin{equation}\label{eq:average probA2}
    \ave{P_\rho} = \frac 1{\Vol(\attrib)}\sum_j \int_{D_j} \frac{\mathrm{vol}\left( B(x,\rho)\cap D_j\right)}{\mathrm{vol}B(x,\rho)}dx \;.
\end{equation}
%where $B(x, \rho)$ is the ball around $x$ of radius $\rho$, $\int_{D_j} dx$ means integration within a Voronoi cell $j$.
 There are two types of points $x\in D_j$: Type I, those $x\in D_j$ so that the ball $B(x,\rho)$ is entirely contained in $D_j$, and type II are the rest (Supplementary Figure~\ref{fig:type1type2}). Note that if  $x$ is close to the boundary of $\attrib$: $\dist(x,\partial \attrib)<\rho$, but far from the interior boundary of the cell, that is $\dist(x,\partial^{\rm int}D_j)>\rho$, then $B(x,\rho)\subseteq D_j$ is entirely contained in the cell, even though it is only a truncated ball (Supplementary Figure~\ref{fig:truncatedball}). This means that these points are type I.  Thus type II points are precisely those $x\in D_j$ so that $\dist(x, \partial^{\rm int}D_j)<\rho$. 
 \begin{figure}[ht]
\begin{center}
\includegraphics[height=60mm]{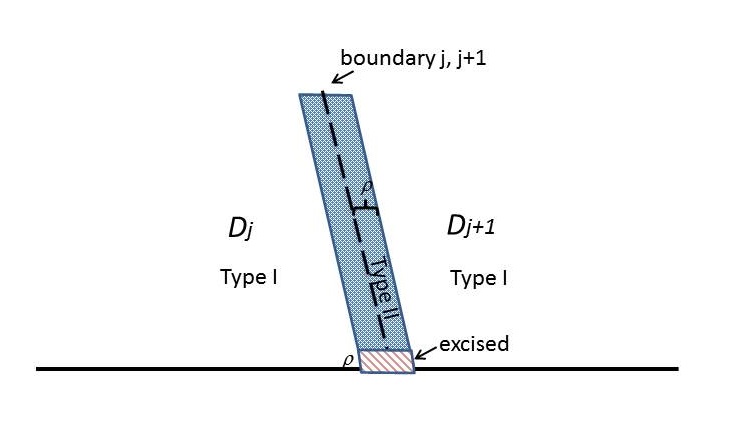}
\caption{Type I region, type II region (shaded) and the excised points near the boundary (shaded region with stripes). }
\label{fig:type1type2}
\end{center}
\end{figure}

 For type I points, we have $ B(x,\rho)\cap D_j =B(x,\rho)$ so that the quotient of volumes equals unity: 
 \[
 \frac{\mathrm{vol}\left( B(x,\rho)\cap D_j\right)}{\mathrm{vol}B(x,\rho)}=1, \quad x\;{\rm of\; type\; I}.
 \]
Thus the type I points contribute
\begin{multline}\label{contribution of type I}
\frac 1{\Vol(\attrib)}\sum_j \int_{\substack{x\in D_j\\x\;{\rm type\; I}}} \frac{\mathrm{vol}\left( B(x,\rho)\cap D_j\right)}{\mathrm{vol}B(x,\rho)}dx   = \sum_j \frac  { \mathrm{vol}\{ x\in D_j\;{\rm type\; I}\} }{\Vol(\attrib)} .
%\\
 %= \frac 1{\Vol(\attrib)}  \Big(\mathrm{vol}(D_j) - \mathrm{vol}\Big( x\in D_j\;{\rm type\; II}\Big)  \Big)
\end{multline}

The type II points are contained in a ``strip'' around the interior boundary of ``width'' $2\rho$. We excise the contribution of points  which are also $\rho$-close to $\partial \attrib$ or to more than one interior face (Supplementary Figure~\ref{fig:type1type2}). The volume of these points is bounded by $O(\rho^2)$, since they are at distance $\leq \rho$ from the intersection of two faces or the intersection of a face with $\partial \attrib$, which has codimension $2$. Since 
$\mathrm{vol}\left( B(x,\rho)\cap D_j \right)/\mathrm{vol}B(x,\rho)\leq 1$ in any case, the total contribution of such points is 
$O(\rho^2)$, which is negligible.  
%\marginpar{is this true in higher dimension?} which is negligible for $\nattribs \geq 2$. 
Thus we need only consider points $x$ with $\dist(x, \partial^{\rm int}D_j)<\rho$ and in addition that $B(x,\rho)$ is an actual Euclidean ball, not a truncated one.

\begin{prop}\label{prop type II} 
For $\rho$ sufficiently small, the contribution of type II points is
\[
 \sum_j \frac { \Vol(x\in D_j\;{\rm type\; II})} {\Vol(\attrib)}    -  \frac {c_\nattribs \Vol_{\nattribs-1} \partial^{\rm int} D_j}{\Vol(\attrib)}  \rho +O(\rho^2) .
\]
\end{prop}
 \noindent Putting together equation \eqref{contribution of type I} and Proposition~\ref{prop type II} gives Proposition~\ref{prop:expand near 0}. 
 \end{proof}
 
\subsection{Proof of Proposition~\ref{prop type II} }
Fix a component $H$ of the interior boundary $\partial^{\rm int} D_j$; $H$ is a hyperplane. After rotation, reflection and translation of the diagram, we may assume that the boundary component $H$ is the coordinate hyperplane $H=\{x=(x_1,\dots x_\nattribs):x_\nattribs=0\}$, and that the cell $D_j$ lies in the top half-space $H_+=\{(y_1,\dots, y_\nattribs): y_\nattribs\geq 0\}$ 
(Supplementary Figure~\ref{fig:truncatedball}).  Then for every $x \in H_+$, we have $\dist(x,H) = x_\nattribs$ and we assume that $0\leq x_\nattribs\leq \rho$. We need to compute 
\begin{multline*}
\int\limits_{x: (x_1,\dots,x_{\nattribs-1},0)\in H}  \frac{\mathrm{vol}\left( B(x,\rho)\cap D_j\right)}{\mathrm{vol}B(x,\rho)} dx_1\dots dx_\nattribs 
\\
= \frac 1{ \mathrm{vol}B(0,\rho)}  
\int\limits_{\substack{0\leq x_\nattribs\leq \rho\\  (x_1,\dots,x_{\nattribs-1},0)\in H}}
\mathrm{vol}\left( B(x,\rho)\cap D_j\right) dx_1\dots dx_\nattribs .
\end{multline*}

\begin{figure}[ht]
\begin{center}
\includegraphics[height=40mm]{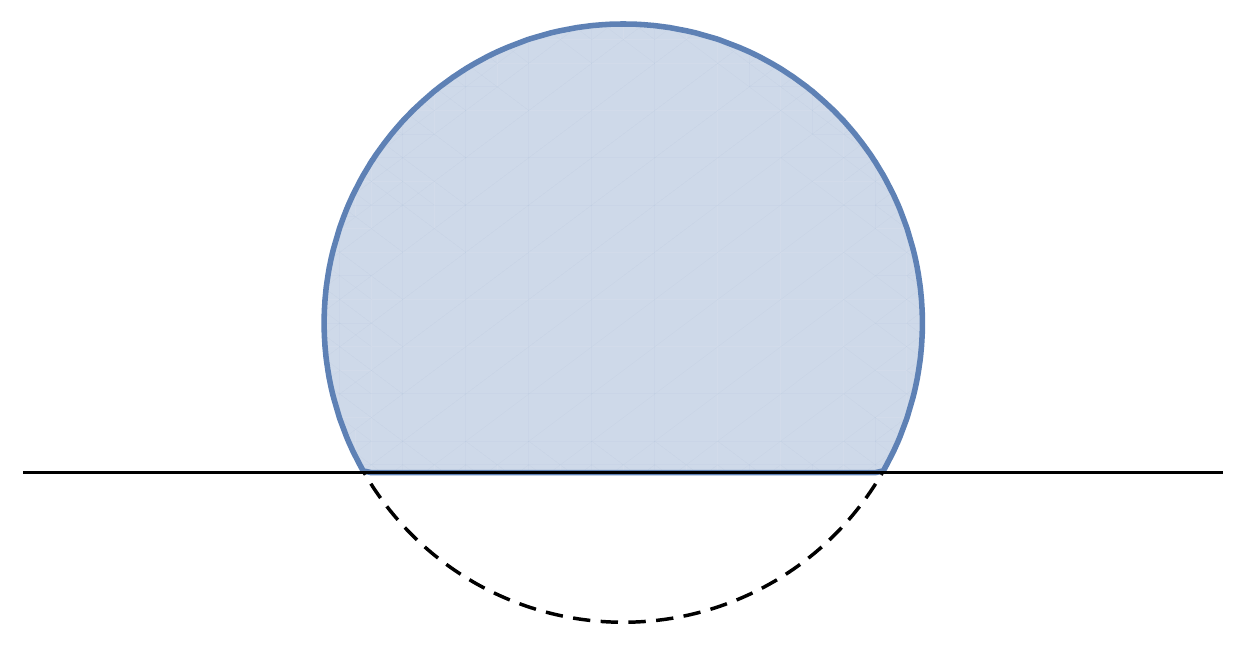}
\caption{ A truncated ball }
\label{fig:truncatedball}
\end{center}
\end{figure}

We fix the first $\nattribs-1$ components $x^0 = (x_1,\dots, x_{\nattribs-1})$, and compute  the integral over $x_\nattribs$: 
\begin{lem}
Fix $(x_1,\dots, x_{\nattribs-1})$ so that $(x_1,\dots,x_{\nattribs-1},0)\in H$. Then 
\begin{multline*}
 \frac{1}{\Vol B(0,\rho)}\int_{x_\nattribs=0}^\rho \mathrm{vol}\Big(  B\Big( \left( x_1,\dots, x_{\nattribs-1},x_\nattribs\right),\rho \Big)\cap H_+ \Big) dx_\nattribs  
 \\
 %= 1 -\frac{\Gamma(\frac \nattribs 2+1)}{2\sqrt{\pi}\Gamma(\frac{\nattribs+3}2)}\cdot \rho +O(\rho^2)
 =1-c_\nattribs  \cdot \rho +O(\rho^2) .
\end{multline*}
\end{lem}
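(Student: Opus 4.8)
The plan is to collapse the $\nattribs$-dimensional volume to a one-variable integral of a spherical-cap volume and then evaluate that integral by an elementary substitution. I would work in the normalized picture already set up before the lemma: $H=\{x_\nattribs=0\}$, the cell $D_j$ sitting in $H_+=\{y_\nattribs\ge 0\}$, and $x=(x^0,x_\nattribs)$ with $x^0=(x_1,\dots,x_{\nattribs-1})$ a fixed point of $H$ and $0\le x_\nattribs\le\rho$. For $\rho$ small and $x^0$ at distance larger than $\rho$ from $\partial H$ and from $\partial\attrib$, inside $B(x,\rho)$ the cell $D_j$ coincides with the half-space $H_+$ (a polytope agrees with its supporting half-space near the relative interior of a facet, and no clipping by $\partial\attrib$ occurs). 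The non-generic $x^0$ form a subset of $H$ of $(\nattribs-1)$-volume $O(\rho)$ and, since the inner integral is always $\le\rho$, they contribute $O(\rho^2)$ — exactly the slack already removed by the excision in the proof of Proposition~\ref{prop:expand near 0}. So it suffices to compute with $H_+$ in place of $D_j$.

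Since the centre of $B(x,\rho)$ lies at signed height $x_\nattribs\ge 0$ above $H$, the part of the ball outside $H_+$ is a spherical cap of the $\rho$-ball, cut off by the hyperplane at distance $x_\nattribs$ from the centre on the side away from $D_j$:
\[
\Vol\big(B(x,\rho)\cap H_+\big)=\Vol B(0,\rho)-\kappa(x_\nattribs),\qquad
\kappa(s):=\Vol\{y\in B(0,\rho):y_\nattribs\le -s\}.
\]
Slicing $\kappa$ by hyperplanes orthogonal to the last axis gives $\kappa(s)=\int_s^{\rho}\omega_{\nattribs-1}(\rho^2-u^2)^{(\nattribs-1)/2}\,du$, where $\omega_d=\pi^{d/2}/\Gamma(\tfrac d2+1)$ is the volume of the unit $d$-ball. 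Hence the left-hand side of the lemma equals
\[
\frac{1}{\Vol B(0,\rho)}\int_0^{\rho}\Big(\Vol B(0,\rho)-\kappa(x_\nattribs)\Big)\,dx_\nattribs
=\rho-\frac{1}{\Vol B(0,\rho)}\int_0^{\rho}\kappa(s)\,ds .
\]

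I would then evaluate $\int_0^{\rho}\kappa(s)\,ds$ by Fubini on the triangle $0\le s\le u\le\rho$, which turns it into $\omega_{\nattribs-1}\int_0^{\rho}u(\rho^2-u^2)^{(\nattribs-1)/2}\,du=\omega_{\nattribs-1}\rho^{\,\nattribs+1}/(\nattribs+1)$ via the substitution $v=\rho^2-u^2$. Dividing by $\Vol B(0,\rho)=\omega_\nattribs\rho^{\,\nattribs}$ produces the linear term with
\[
c_\nattribs=\frac{\omega_{\nattribs-1}}{(\nattribs+1)\,\omega_\nattribs}
=\frac{1}{(\nattribs+1)\sqrt\pi}\,\frac{\Gamma(\tfrac\nattribs2+1)}{\Gamma(\tfrac{\nattribs+1}{2})},
\]
and the recursion $\Gamma(\tfrac{\nattribs+3}{2})=\tfrac{\nattribs+1}{2}\Gamma(\tfrac{\nattribs+1}{2})$ rewrites this as the constant in \eqref{formula for c_K}; the even/odd closed forms then fall out of $\Gamma(m+1)=m!$, $\Gamma(m+\tfrac12)=\tfrac{(2m)!}{4^m m!}\sqrt\pi$, and the identity $(m+1)\binom{2m+1}{m}=(2m+1)\binom{2m}{m}$.

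I do not expect a real obstacle: the argument is a reduction to one variable followed by an elementary integral. The two places that need care are (i) the geometric reduction — justifying that near a generic relative-interior point of the facet $H$ the set $B(x,\rho)\cap D_j$ genuinely is a cap of the \emph{full} $\rho$-ball, so that the $D_j$-versus-$H_+$ and $\partial\attrib$-clipping discrepancies are of lower order and already absorbed upstream — and (ii) the $\Gamma$-bookkeeping needed to land on the exact form of $c_\nattribs$ and its even/odd expressions. (For a fixed generic $x^0$ the identity is in fact exact, with $\rho$ where the displayed statement prints ``$1$''; the $O(\rho^2)$ term is the uniform slack covering the non-generic $x^0$.)
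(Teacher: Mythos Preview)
Your approach is essentially the paper's: pass to the complementary small cap and compute $\int_0^\rho(\text{cap volume})\,dx_\nattribs$, then divide by $\Vol B(0,\rho)$. The one genuine difference is in how that integral is evaluated. The paper writes the cap as a $(\nattribs-1)$-dimensional integral $\int_{|\vec y|^2\le \rho^2-x_\nattribs^2}(\sqrt{\rho^2-|\vec y|^2}-x_\nattribs)\,d\vec y$, swaps the order of integration, and then uses spherical polar coordinates in $\R^{\nattribs-1}$, which forces a separate treatment of $\nattribs=2$ versus $\nattribs\ge 3$. Your slice-then-Fubini route collapses everything to $\omega_{\nattribs-1}\int_0^\rho u(\rho^2-u^2)^{(\nattribs-1)/2}\,du$ and finishes with a single substitution; it is shorter, works uniformly in $\nattribs$, and lands directly on $c_\nattribs=\omega_{\nattribs-1}/((\nattribs+1)\omega_\nattribs)$, which indeed rewrites to the form in \eqref{formula for c_K}.

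Your closing remark is also correct: for a fixed generic $x^0$ the computation is \emph{exact} and the value is $\rho-c_\nattribs\rho$, not $1-c_\nattribs\rho+O(\rho^2)$. The ``$1$'' in the displayed formula (and in the intermediate step $\mathrm{vol}B(0,\rho)-\int_0^\rho A(x_\nattribs)\,dx_\nattribs$ of the paper's proof, which is missing a factor of $\rho$ on the first term) is a slip; the downstream use in Proposition~\ref{prop type II} only makes sense with $\rho$ there, since $\rho\cdot\Vol_{\nattribs-1}(\text{face})$ is what recovers the type-II volume after integrating over the face. So your reading of the $O(\rho^2)$ as slack for the non-generic $x^0$ (handled by the excision before the lemma) is the right way to reconcile the statement with its role.
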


\begin{proof}
Since the integral is independent of the first $\nattribs-1$ components, those may be taken to be zero, so that 
$(x_1,\dots, x_{\nattribs-1})=(0,\dots,0)$. So we want to compute
\[
 \int_{x_\nattribs=0}^\rho \mathrm{vol}\Big(  B\Big( \left( 0,\dots,0,x_\nattribs\right),\rho \Big)\cap H_+ \Big) dx_\nattribs . 
\]

The set $B\Big( \left( 0,\dots,0,x_\nattribs\right),\rho \Big)\cap H_+$ is the bigger half of the ball $B\Big( \left( 0,\dots,0,x_\nattribs\right),\rho \Big)$ (see Supplementary Figure~\ref{fig:truncatedball}); we find it easier to compute the integral over the complementary, smaller half, which is a spherical cap (Supplementary Figure~\ref{fig:smallcap}), and this in turns equals 
\begin{figure}[ht]
\begin{center}
\includegraphics[height=20mm]{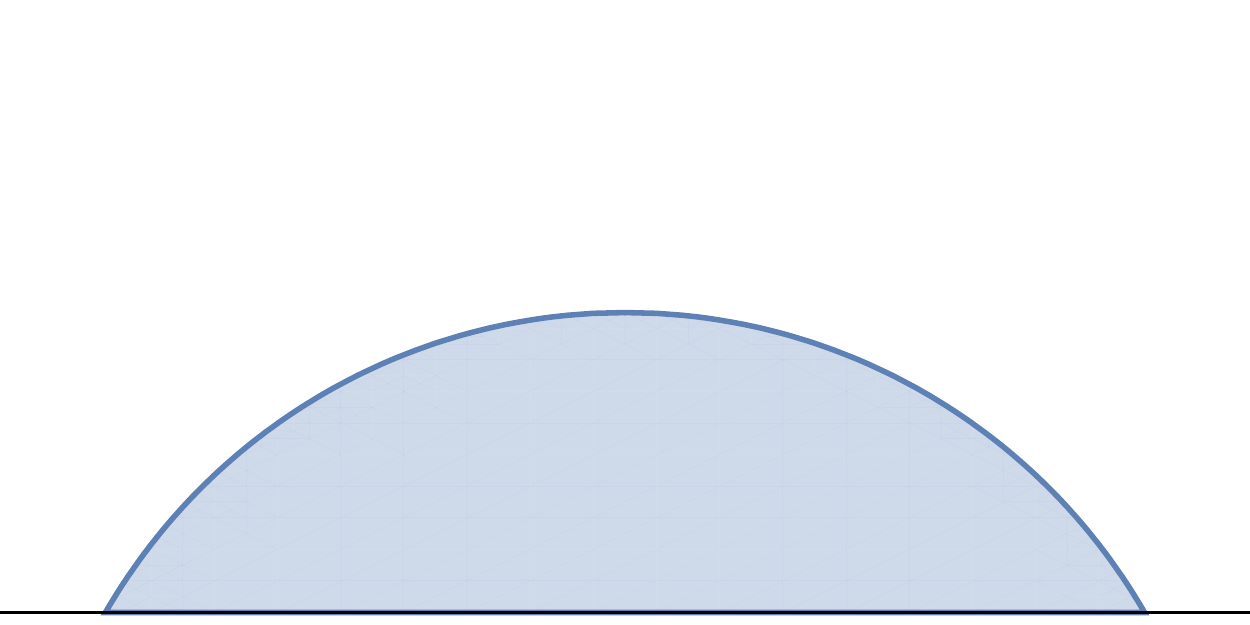}
\caption{ A spherical cap}
\label{fig:smallcap}
\end{center}
\end{figure}
\begin{multline*}
  \int_{x_\nattribs=0}^\rho \mathrm{vol}\Big(  B\Big( \left( 0,\dots,0,x_\nattribs\right),\rho \Big)\cap H_+ \Big) 
  dx_\nattribs  
  \\
  = 
  \mathrm{vol}B(0,\rho)- \int_{x_\nattribs=0}^\rho \mathrm{vol}\Big(  B\Big( \left( 0,\dots,0,-x_\nattribs\right),\rho \Big)\cap H_+ \Big) dx_\nattribs  .
\end{multline*}
Dividing by $ \mathrm{vol}B(0,\rho)$ gives
\[
1-\frac 1{ \mathrm{vol}B(0,\rho)} \int_{ 0}^\rho A(x_\nattribs) dx_\nattribs
\]
where $A(x_\nattribs)$ is the volume of the small spherical cap
\[
\begin{split}
 A(x_\nattribs)& := \mathrm{vol}\Big(  B\Big( \left( 0,\dots,0,-x_\nattribs\right),\rho \Big) \cap H_+\Big)
 \\
 &= \mathrm{vol}\{(\vec y,z):  \vec y\in \R^{\nattribs-1}, \; z \geq 0, \;
 |\vec y|^2+(z+x_\nattribs)^2 \leq \rho^2\} 
 \\
 &=\int_{\substack{ \vec y\in \R^{\nattribs-1}\\ |\vec y|^2\leq \rho^2-x_\nattribs^2}} 
 \int_{z=0}^{-x_\nattribs+\sqrt{\rho^2-|\vec y|^2}}dz d\vec y
 \\
 &= \int_{  |\vec y|^2\leq \rho^2-x_\nattribs^2} \Big( \sqrt{\rho^2-|\vec y|^2}-x_\nattribs\Big) d^{\nattribs-1}\vec y
\\
& =  \int_{  |\vec y|^2\leq \rho^2 } \mathbf 1(|\vec y|^2 +x_\nattribs^2\leq \rho^2)  \Big( \sqrt{\rho^2-|\vec y|^2}-x_\nattribs \Big)
 d^{\nattribs-1}\vec y .
\end{split}
\]
Now integrate over $x_\nattribs\in [0,\rho]$: Switching order of integration gives
\[
\begin{split}
\int_0^\rho  A(x_\nattribs)dx_\nattribs &= \int_0^\rho \int_{  |\vec y|^2\leq \rho^2 } \mathbf 1(|\vec y|^2 +x_\nattribs^2\leq \rho^2)  \Big( \sqrt{\rho^2-|\vec y|^2}-x_\nattribs \Big)
 d^{\nattribs-1}\vec y dx_\nattribs 
 \\
 & = \int_{  |\vec y|^2\leq \rho^2 } \int_{x_\nattribs=0}^\rho \Big( \sqrt{\rho^2-|\vec y|^2}-x_\nattribs \Big)    \mathbf 1(|\vec y|^2 +x_\nattribs^2\leq \rho^2)  dx_\nattribs d^{\nattribs-1}\vec y
 \\
 &= \int_{  |\vec y|^2\leq \rho^2 } \int_{x_\nattribs=0}^{\sqrt{\rho^2-|\vec y|^2} }\Big( \sqrt{\rho^2-|\vec y|^2}-x_\nattribs \Big)    dx_\nattribs d^{\nattribs-1}\vec y
 \\
 &= \int_{  |\vec y|^2\leq \rho^2 }(\rho^2-|\vec y|^2)  d^{\nattribs-1}\vec y - \int_{  |\vec y|^2\leq \rho^2 }\frac 12  (\rho^2-|\vec y|^2) d^{\nattribs-1}\vec y
 \\
 &= \frac 12  \int_{  |\vec y|^2\leq \rho^2 }(\rho^2-|\vec y|^2)  d^{\nattribs-1}\vec y
 \\
 &= \rho^{\nattribs+1}  \frac 12  \int_{  |\vec y|  \leq 1 }(1-|\vec y|^2)  d^{\nattribs-1}\vec y .
\end{split}
\]

When $\nattribs=2$, this equals
\[
\rho^3 \frac 12 \int_{y=-1}^1 (1-y^2 )dy =    \frac 23 \rho^3
\]
and dividing by the area of $B(0,\rho) = \pi \rho^2$ gives
\[
\frac 1{\area B(0,\rho)} \int_0^\rho A(x_2)dx_2 = \frac{2}{3\pi}\rho =c_2\rho .
\]

For $\nattribs\geq 3$, we will use polar coordinates: In $G\geq 2$ dimensions  (we will take both $G=\nattribs-1$ and $G=\nattribs$) 
\begin{equation*}
x_j = r \cos(\theta_j) \prod_{k=1}^{j-1} \sin\theta_k, \quad j=1,\dots,G-1, \quad  
x_{G} = r \prod_{k=1}^{G-1} \sin\theta_k
\end{equation*}
with $r\geq 0$, $0\leq\theta_j\leq \pi$ for $j=1,\dots,G-2$ and $0\leq \theta_{G-1} \leq 2\pi$.
The Jacobian of this transformation is
\begin{equation*}
J_{G}(r,\theta) = r^{G -1} \prod_{j=1}^{G-2} (\sin\theta_j)^{G-1-j}.  
\end{equation*}
The volume of the   ball $B(0,\rho)\subset \R^\nattribs$ in dimension $G=\nattribs$ is thus
\[
\begin{split}
\Vol_\nattribs B(0,\rho) &= \int_{r=0}^\rho  r^{\nattribs-1}  dr \prod_{j=1}^{\nattribs-2}\int_{\theta_j=0}^\pi    (\sin\theta_j)^{\nattribs-1-j}   d\theta_j \int_{\theta_{\nattribs-1}=0}^{2\pi}d\theta_{\nattribs-1} 
\\
&= 
\rho^\nattribs \frac { 2\pi }\nattribs \int_{\theta_1=0}^\pi (\sin\theta_1)^{\nattribs-2}  d\theta_1 \cdot 
\prod_{j=2}^{\nattribs-2} \int_{0}^\pi (\sin\theta_j)^{\nattribs-1-j}   d\theta_j   
\\
&= 
\frac { 2\pi }\nattribs \int_{\theta_1=0}^\pi (\sin\theta_1)^{\nattribs-2}  d\theta_1 \cdot 
\prod_{i=1}^{\nattribs-3} \int_{0}^\pi (\sin\theta_j)^{\nattribs-2-i}   d\theta_i    \cdot \rho^\nattribs
\\
&= \frac { 2\pi }\nattribs  \frac{\sqrt{\pi } \Gamma \left(\frac{\nattribs-1}{2}\right)}{\Gamma \left(\frac{\nattribs}{2}\right)}
\cdot \prod_{i=1}^{\nattribs-3} \int_{0}^\pi (\sin\theta_j)^{\nattribs-2-i}   d\theta_i    \cdot \rho^\nattribs .  
\end{split}
\]
 
Using polar coordinates in $\R^{\nattribs-1}$, $\nattribs\geq 3$ (so that $G=\nattribs-1$), gives
\[
\begin{split}
\int_{  |\vec y|  \leq 1 }(1-|\vec y|^2)  d^{\nattribs-1}\vec y 
& = \int_{0}^1(1-r^2) r^{\nattribs-2} dr 
\prod_{j=1}^{\nattribs-3} \int_0^\pi (\sin \theta_j)^{\nattribs-2-j} d\theta_j  \int_0^{2\pi} d\theta_{\nattribs-2} 
\\
&= \frac 2{\nattribs^2-1}  2\pi \prod_{j=1}^{\nattribs-3} \int_0^\pi (\sin \theta_j)^{\nattribs-2-j} d\theta_j
\end{split}
\]
so that 
\[
 \int_{ 0}^\rho A(x_\nattribs) dx_\nattribs  =    \frac 1{\nattribs^2-1}  2\pi \prod_{j=1}^{\nattribs-3} \int_0^\pi (\sin \theta_j)^{\nattribs-2-j} d\theta_j \cdot \rho^{\nattribs+1} . 
 \]
Dividing we find that 
\[
%\begin{split}
\frac 1{ \mathrm{vol}B(0,\rho)} \int_{ 0}^\rho A(x_\nattribs) dx_\nattribs 
 % &
=  \frac{  \nattribs}{\nattribs^2-1}
 \frac{  \Gamma \left(\frac{\nattribs}{2}\right)}{ \sqrt{\pi } \Gamma \left(\frac{\nattribs-1}{2}\right)}\cdot \rho
 %\\ & 
 = \frac{\Gamma(\frac \nattribs 2+1)}{2\sqrt{\pi}\Gamma(\frac{\nattribs+3}2)}\cdot \rho 
 %=c_\nattribs \rho  .
%\end{split}
\]
which equals $c_\nattribs \rho$.
 \end{proof}

We can now complete the proof of  Proposition~\ref{prop type II}: 
Until now, we have fixed the coordinates $(x_1,\dots, x_{\nattribs-1})$, where the particular face of the cell is $(x_1,\dots, x_{\nattribs-1},0)\in H\cap D_j$;  integrating over these coordinates, we obtain the $(\nattribs-1)$-dimensional volume of that face up to an error of $O(\rho^2)$ , and summing over all interior faces of the cell $D_j$  and then over the various cells, we obtain 
\[
\frac 1{\Vol \attrib} \sum_j  \Big( \Vol(x\in D_j: {\rm \; of \; type \; II})- \Vol_{\nattribs-1}(\partial^{\rm int} D_j) c_\nattribs \rho \Big)  +O(\rho^2)
\]
as asserted by Proposition~\ref{prop type II}. \qed

\section{Distance Based Matching Metric}

We explore the sensitivity of our results in Figure 2 of the main text to a matching measure which is based on distance rather than a matching/non-matching binary classification. One can say, that a binary match/non-match classification does not provide information as to how much the chosen alternative is worse than the optimal one, and thus it makes it harder to evaluate the overall dissatisfaction in the population. A metric which measures the average distance between the possible chosen alternatives and the true location, might provide additional information as to the level of satisfaction of the individual from the chosen alternative.  

To construct such metric, consider an individual at location $x$, within a single Voronoi cell $D_j$. If there is no uncertainty in the perceived location of that individual ($\rho=0$),the individual is assigned to alternative $j$, at a distance $d(\rho = 0,x) = |x-j|$ to the alternative. If the individual mistakenly perceives his position as $y$, leading to choosing another alternative, $i$, then the distance between the true location and the chosen alternative is $|x - i|$. Assuming a uniformly distributed error ball of radius $\rho$ around $x$, we obtain the average distance between the chosen alternative and the true position as:

\begin{equation} \label{d_of_rho}
    d \left( \rho, x \right) = \sum_j  \frac{\Vol\Big(D_j\cap B(x,\rho)\Big)}{\Vol B(x,\rho)}\left|  x - j \right|
\end{equation}

Supplementary Figure~\ref{fig:Vornoi_distance}a shows the effect of uncertainty on the average distance to the chosen alternative: $d \left( \rho, x \right) - d \left( \rho = 0 , x \right) $. By integrating over the attribute space we obtain the average distance between all of the individuals and their chosen alternatives. To measure the elasticity of the overall match on the error $\rho$, we divide the above average by the average distance obtained for $\rho=0$:

\begin{equation} \label{ave_d_of_rho}
    \ave{d \left( \rho \right)} = \frac{\int  d \left( \rho, x \right) dx}{\int  d \left( \rho = 0, x \right) dx} 
\end{equation}

The metric $ \ave{d \left( \rho \right)} $ represents the average incremental distance between the true location and all the possible chosen alternatives within the error ball, relative to the no error case. The larger is its deviation from 1, the higher is the distance between the individuals and their chosen alternatives. 

Supplementary Figure~\ref{fig:Vornoi_distance}a visualizes the effect of uncertainty for the case of the attribute space shown in Figures 1 and 2 of the main text. We plot $d(\rho,x) - d(\rho=0,x)$ at each point of the attribute space. Just as in the case of the binary metric, most of the effect of the uncertainty lies within a strip of radius $\rho$ around the boundaries. However, unlike the binary metric, where the boundaries are the most sensitive to the occurrence of a mismatch, for the distance-based metric, the boundaries are the regions which are the least sensitive to a mismatch, as the distance to the alternatives on both sides of the boundary is of similar magnitude.

Supplementary Figure~\ref{fig:Vornoi_distance}b shows $1/ \ave{d \left( \rho \right)} $ vs. $\rho$ for the same attribute space. The value  of $1/ \ave{d \left( \rho \right)} $ decreases with $\rho$. Note, that unlike the linear decrease in the binary metric, this decrease, for low values of $\rho$, can be fitted by a parabola $ \ave{d \left( \rho \right)} \propto \rho^2$. The effect of the uncertainty is thus second order in $\rho$. 

To compare the effect of the uncertainty between the binary and the distance based cases, consider $\rho=0.15$. The mismatch probability in the binary case is 20$\%$  (as shown in Figure 2 of the main text), however the value of $\ave{d(\rho)}$ is 1.03. That is, although on average 20$\%$ of the population is expected to choose an alternative which is not optimal, the distance to their chosen alternatives is expected to increase by 3$\%$.

% This quantity depends strongly on the relative position of x within the Voronoi cell. We normalize $d \left( \rho, x \right)$ and make it dimensionless by dividing it: $\Tilde{d} \left( \rho, x \right) =  d \left( \rho, x \right) / d \left( \rho = 0, x \right)$ . Averaging over the entire attribute space we get:
% \begin{equation}
%    \ave{\Tilde{d} \left( \rho \right)} =\frac 1{\Vol \attrib}  \int \frac{d \left( \rho, x \right)}{d \left( \rho = 0, x \right)} dx
% \end{equation}
% This is the average additional distance, in percentage, resulted by the uncertainty $\rho$.

\begin{figure}[ht]
\begin{center}
\includegraphics[width = \textwidth]{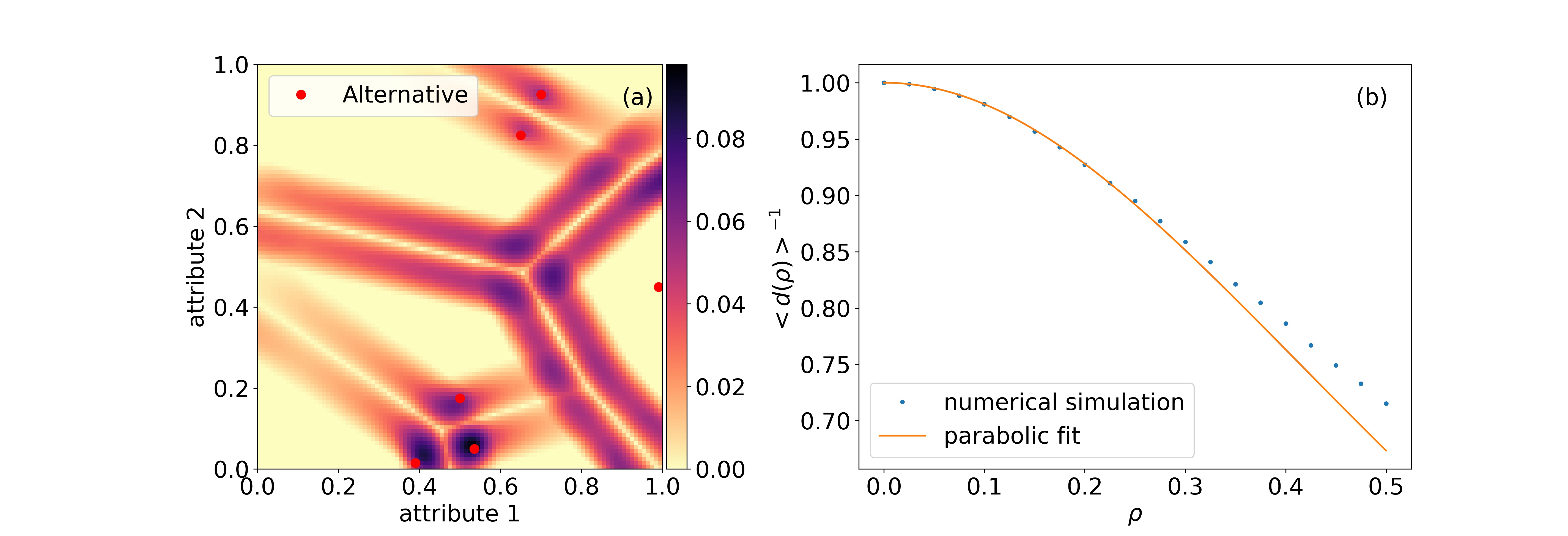}
\caption{\textbf{Distance based matching.}  \textbf{a.} The effect of the uncertainty on the average distance to the chosen alternative: $d(\rho,x) - d(\rho=0,x)$ obtained using numerical simulations for the example shown in Figure 1 of the main text and $\rho = 0.125$ \textbf{b.} $1/\ave{d (\rho)}$ vs. $\rho$ for the same example. 
}
\label{fig:Vornoi_distance}
\end{center}
\end{figure}

\end{document}

% --- supplement: supplementary/supplement.tex ---

\newtheorem{open}{Open Problem}
\newtheorem{thm}{Theorem}[section]
\newtheorem*{thm*}{Theorem}%[section]
\newtheorem{lem}[thm]{Lemma}
\newtheorem{prop}[thm]{Proposition}
\newtheorem{cor}[thm]{Corollary}
\newtheorem*{defn}{Definition}
\newtheorem*{remark}{Remark}
\newtheorem{conj}{Conjecture}%[section]
\newtheorem{exercise}{Exercise}
\newtheorem*{exercise*}{Exercise}
\newtheorem{example}{Example}
\newtheorem*{example*}{Example}

%\numberwithin{equation}{section}

\newcommand{\Z}{{\mathbb Z}} %cph changed from \mathbf
\newcommand{\Q}{{\mathbb Q}}
\newcommand{\R}{{\mathbb R}}
\newcommand{\C}{{\mathbb C}}
\newcommand{\N}{{\mathbb N}}
\newcommand{\FF}{{\mathbb F}}
\newcommand{\fq}{\mathbb{F}_q}
\newcommand{\feq}{\overline{\mathbb F}_q}

\newcommand{\rmk}[1]{\footnote{{\bf Comment:} #1}}

\renewcommand{\mod}{\;\operatorname{mod}}
\newcommand{\ord}{\operatorname{ord}}
\newcommand{\TT}{\mathbb{T}}
%\newcommand{\Det}{\operatorname{Det}}
\renewcommand{\i}{{\mathrm{i}}}
\renewcommand{\d}{{\mathrm{d}}}
\renewcommand{\^}{\widehat}
\newcommand{\HH}{\mathbb H}
\newcommand{\Vol}{\operatorname{vol}}
\newcommand{\area}{\operatorname{area}}
\newcommand{\tr}{\operatorname{tr}}
\newcommand{\norm}{\mathcal N} % norm =(\frac{ n+\sqrt{n^2-4}} 2)^2
\newcommand{\intinf}{\int_{-\infty}^\infty}
\newcommand{\ave}[1]{\left\langle#1\right\rangle} %  average
\newcommand{\Var}{\operatorname{Var}}
\newcommand{\Prob}{\operatorname{Prob}}
\newcommand{\sym}{\operatorname{Sym}}
\newcommand{\disc}{\operatorname{disc}}
\newcommand{\CA}{{\mathcal C}_A}
\newcommand{\cond}{\operatorname{cond}} % conductor
\newcommand{\lcm}{\operatorname{lcm}}
\newcommand{\Kl}{\operatorname{Kl}} %Kloosterman sum
\newcommand{\leg}[2]{\left( \frac{#1}{#2} \right)}  % Legendre symbol

%\newcommand{\sumf}{{\sideset\and^{\flat} \to \sum}}
\newcommand{\sumstar}{\sideset \and^{*} \to \sum}

\newcommand{\LL}{\mathcal L} %L-function of u
\newcommand{\sumf}{\sum^\flat}
\newcommand{\Hgev}{\mathcal H_{2g+2,q}}
\newcommand{\USp}{\operatorname{USp}}
\newcommand{\conv}{*}
\newcommand{\dist} {\operatorname{dist}}
\newcommand{\CF}{c_0} % Fejer constant
\newcommand{\kerp}{\mathcal K}

\newcommand{\fs}{\mathfrak S}
\newcommand{\rest}{\operatorname{Res}} % resultant
\newcommand{\af}{\mathbb A} % affine line
\newcommand{\Li}{\operatorname{Li}}
\newcommand{\Sel}{\mathcal S}
\newcommand{\SF}{\mathbf 1_{\rm SF}}
 \newcommand{\SFz}{\mathbf 1_{{\rm SF},z}}

\newcommand{\T}{\mathbb T} % torus
\newcommand{\E}{\mathbb E}
\newcommand{\length}{\operatorname{length}}
\newcommand{\nattribs}{K} % number of attributes

\title[Supplementary material]{Modeling allocation match under uncertainty: The expected probability of correct assignment
\\ $ \quad$ 
\\
Supplementary material}
\author{Tom Dvir, Renana Peres and Ze\'ev Rudnick}
%\address{HUJI ADDRESS}
%\email{tom.dvir@gmail.com, peresren@huji.ac.il}
%\address{Raymond and Beverly Sackler School of Mathematical Sciences,Tel Aviv University, Tel Aviv 69978, Israel}
%\email{rudnick@tauex.tau.ac.il}

%\thanks{The work was supported  by the European Research Council under the European Union's Seventh
%Framework Programme (FP7/2007-2013)/ERC grant agreement n$^{\text{o}}$~786758 (Z.R.) }

  \date{\today}

\maketitle
 
 We give a proof of the formula for the first variation of the expected probability of correct assignment. 
 
 \tableofcontents
 
\section{A simple formula for the expectation}

We start with a partition of the unit torus $\T^\nattribs =(\Z/\R)^\nattribs$  into $J$ cells with disjoint interiors: $\T^\nattribs= \coprod_{j=1}^J D_j$ . In the application at hand, this is the Voronoi tessellation into basins of attraction. 

We fix an error radius $\rho>0$, and for a point $x\in \T^\nattribs$ we ask what is the probability $P_\rho(x)$ that we select  the correct 
basin of attraction (cell), using an error bar of $\rho$? That is, given that $x\in D_j$, what is the probability  
that we select $D_j$ as the basin of attraction, using an error bar of $\rho$?  Note that the problem only makes sense for $\rho$ small, because once $\rho$ is sufficiently large so that the ball $B(x,\rho)$ covers all of $\T^\nattribs$, say $\rho>\rho_{\max}$, then the question is independent of $\rho$. 
 
We can write a formula for the expected value $\ave{P_\rho} $ of $P_\rho(x)$ (as we average over $x$): 
\begin{prop}
For $\rho<\rho_{\max}$,  
\begin{equation}\label{gen formula}
\ave{P_\rho} =  \int_{\T^\nattribs} \frac{\chi_\rho(z;0)}{\Vol B(0,\rho)}  \sum_j \Vol\Big(D_j\cap(D_j-z)\Big) dz
\end{equation}
where $B(x, \rho)$ is the ball around $x$ of radius $\rho$, and $\chi_\rho(x, y)=1$ if $\dist(x,y)\leq \rho$, and $\chi_\rho(x,y)=0$ otherwise.  

If $\rho>\rho_{\max}$, then $\ave{P_\rho} $ saturates at 
$\ave{P_\rho} = \sum_{j=1}^J \Big( \Vol D_j \Big)^2$. 
\end{prop}
\begin{proof}
To see \eqref{gen formula}, note that  given that $x\in D_j$,   the probability $p(D_j,\rho)(x)$ 
that we select $D_j$ as the basin of attraction  is  the relative area of the intersection of the ``ball'' of radius $\rho$ with the cell $D_j$: 
\[
p(D_j,\rho)(x)=
% \Prob(x\in B(x,\rho)\cap D_j) \Big| x\in D_j) = 
\begin{cases}\frac{\Vol\left(D_j\cap B(x,\rho)\right)}{\Vol B(x,\rho)},&x\in D_j\\   \\
0,&x\notin D_j . \end{cases}
\]

We want to compute the expected value of $p(D_j,\rho)(x)$ (we average over $x$):
\[
\begin{split}
\ave{P_\rho} &=\sum_j \E(p(D_j,\rho)(x)  ) 
%\\&\mbox{where }\E(\bullet |x\in D_j) \mbox{ denotes the conditional expectation}
\\&= 
\sum_j  \int_{D_j} p(D_j,\rho)(x) dx
\\
& =\sum_j  \int_{D_j} \frac{\Vol\Big(D_j\cap B(x,\rho)\Big)}{\Vol B(x,\rho)}dx .
\end{split}
\]
%Let 
%\[
%\chi_\rho(x,y) = \begin{cases} 1,&\dist(x,y)\leq \rho \\0,&{\rm else.} \end{cases}
%\] 
Then
\[
\Vol\Big(D\cap B(x,\rho)\Big) = \int_{\T^\nattribs}\mathbf 1_{D}(y)\chi_\rho(x;y)dy
\]
and 
\[
\begin{split}
 \int_{D} \frac{\Vol\Big(D\cap B(x,\rho)\Big)}{\Vol B(x,\rho)}dx 
 &= 
 \int_{\T^\nattribs} \mathbf 1_{D}(x)\int_{\T^\nattribs} \mathbf 1_{D}(y)\frac{\chi_\rho(x;y)}{\Vol B(0,\rho)} dy dx
\\
&= \int_{\T^\nattribs} \frac{\chi_\rho(z;0)}{\Vol B(0,\rho)} \Big(\int_{\T^\nattribs} \mathbf 1_{D}(x)\mathbf 1_{D}(x+z) dx \Big) dz
\\
&= \int_{\T^\nattribs} \frac{\chi_\rho(z;0)}{\Vol B(0,\rho)} \Vol\Big(D\cap(D-z)\Big) dz .
\end{split}
\]
Thus we find
\[
\ave{P_\rho} =  \int_{\T^\nattribs} \frac{\chi_\rho(z;0)}{\Vol B(0,\rho)}  \sum_j \Vol\Big(D_j\cap(D_j-z)\Big) dz .
\]

To see the saturation value, take $\rho$ to be larger than the diameter of the space of attributes $\TT^\nattribs$, so that for each point $x$, the ball $B(x,\rho)$ coincides with all of the space of attributes $\TT^\nattribs$. %Applying Formula~\eqref{gen formula}
Then for each $x\in \TT^\nattribs$, the intersection $B(x,\rho)\cap D_j=D_j$, and we can compute $\ave{P_\rho}$ simply as a conditional expectation, by writing
\[
P_\rho(x) = \sum_{j=1}^J \mathbf 1_{D_j}(x) \frac{\Vol D_j}{\Vol \TT^\nattribs}
\]
and then 
\[
\begin{split}
\ave{P_\rho} &= \frac 1{\Vol \TT^\nattribs}\int_{\TT^\nattribs}P_\rho(x) dx 
 =  \frac 1{\Vol \TT^\nattribs}\sum_{j=1}^J \int_{\TT^\nattribs} \mathbf 1_{D_j}(x) \frac{\Vol D_j}{\Vol \TT^\nattribs}dx 
\\
&= \sum_{j=1}^J \Big( \frac{\Vol D_j}{\Vol \TT^\nattribs}\Big)^2
= \sum_{j=1}^J \Big( \Vol D_j\Big)^2
\end{split}
\]
as claimed. 
\end{proof}

\section{The one dimensional case}
Formula \eqref{gen formula} makes sense in any dimension,   
but it is only in dimension $\nattribs=1$ that we know how to extract an exact expression from it  (Figure~\ref{fig:expected plots}).    
%So lets turn to the one-dimensional setting $\nattribs=1$.
%Note that in the one-dimensional case, every partition of the circle into intervals with disjoint interiors is in fact a Voronoi tessellation. 
%In this case, we can explicitly compute the expected probability of correct assignment (Figure~\ref{fig:expected plots}). 

 \begin{figure}[ht]
\begin{center}
\includegraphics[height=60mm]{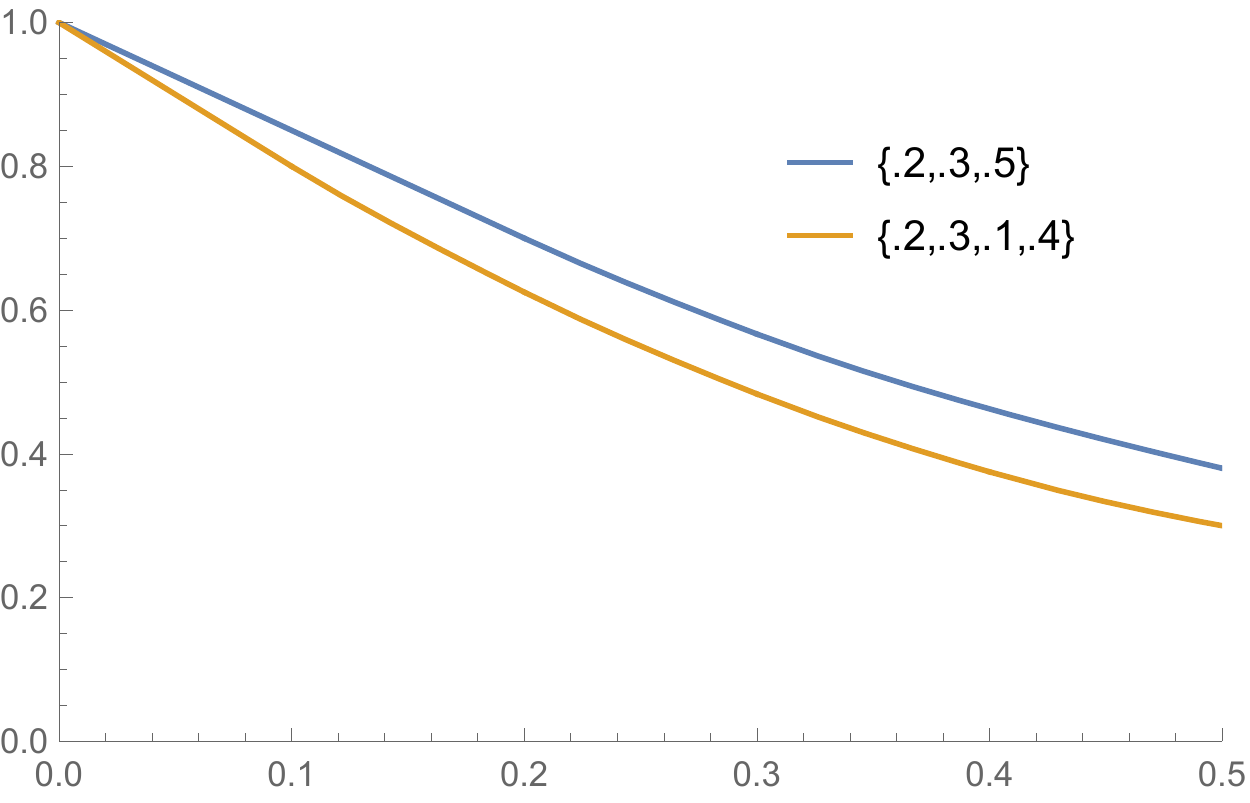}
\caption{ Plots of the expected probability of correct assignment $\ave{P_\rho}$ vs. the error radius $\rho$, for the tessellations $\{.2,.3,.5\}$ (top) and $\{.1,.2,.3,.4\}$ (bottom). }
\label{fig:expected plots}
\end{center}
\end{figure}

\begin{prop}\label{prop:dim 1}
If $\T^1 = \bigcup_{j=1}^J D_j$ is a tessellation of the circle into $J\geq 2$ intervals with disjoint interiors, then for $\rho\leq 1/2$, 
\[
\ave{P_\rho}  
=  \sum_{j:\rho\leq |D_j|}(|D_j|-\frac \rho 2)+\sum_{j: \rho>|D_j|} \frac{(|D_j|)^2}{2\rho} ,   
\]
where $|D|:=\length D$. 

If $\rho\geq \frac 12$, then $\ave{P_\rho}  =  \sum_{j=1}^J  |D_j|^2$. 
\end{prop}

The function $\rho\mapsto \ave{P_\rho}$ is monotonic decreasing, continuous, but not smooth. See Figure~\ref{fig:derivative plot} for the derivative. 

 If $\rho \leq \min(|D_j|)$ then we get 
 \[
 \ave{P_\rho}  =\sum_j (|D_j|-\frac \rho 2) = 1-\frac \rho 2 \cdot J
 \]
 taking into account that $\sum_{j=1}^J |D_j| = \length(\T^1) = 1$. Thus for error radius $\rho$ smaller than the minimal size of the cells, we get a linear function, with slope $-J/2$, where $J$ is the number of cells.  
 
% For $\rho\geq \max (|D_j|)$  we find that 
 %\[\ave{P_\rho}  =(\frac 12 \sum_{j=1}^J  |D_j|^2)\cdot \frac 1{\rho} \]
%and in particular, for the maximal admissible value $\rho=1$ we have $\ave{P_\rho}  =\frac 12 \sum_{j=1}^J  |D_j|^2$. 

 \begin{figure}[ht]
\begin{center}
\includegraphics[height=60mm]{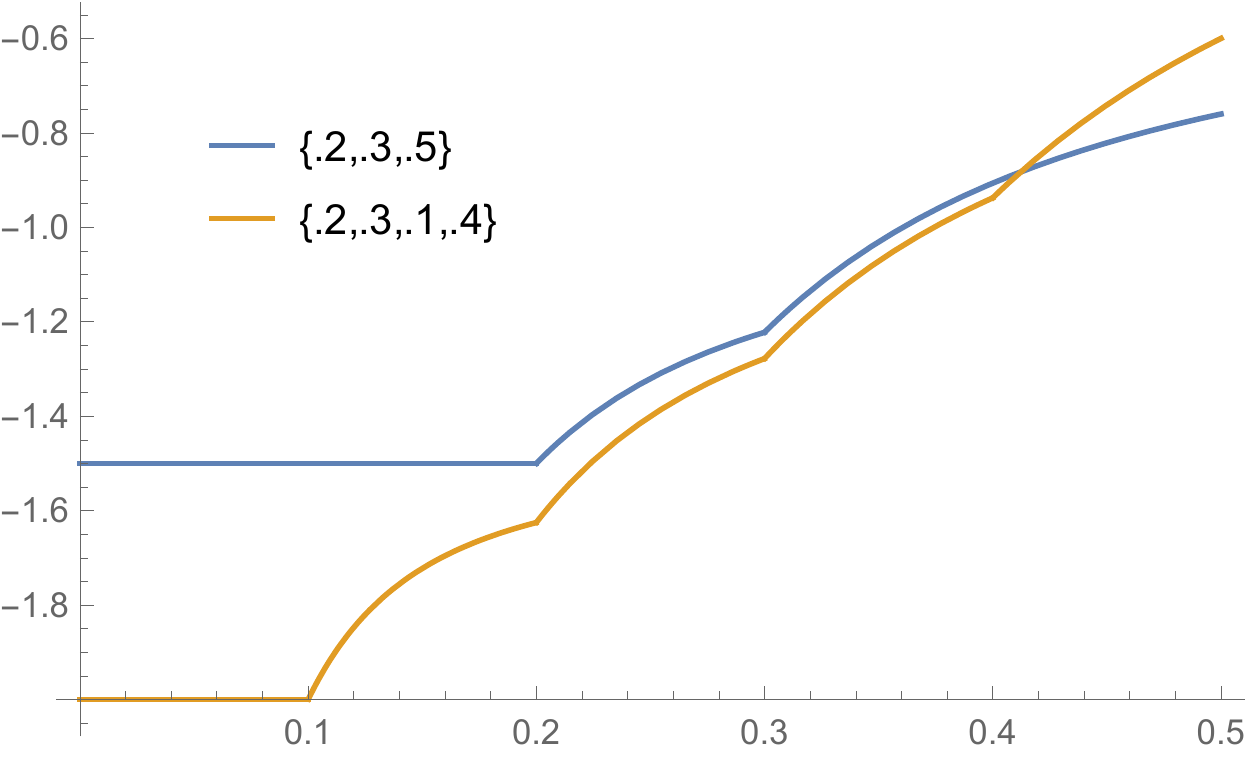}
\caption{Derivative of the expected probability of correct assignment $d\ave{P_\rho}/d\rho$ vs. the error radius $\rho$, for the tessellations $\{.2,.3,.5\}$ (top) and $\{.1,.2,.3,.4\}$ (bottom). }
\label{fig:derivative plot}
\end{center}
\end{figure}

\begin{proof}
A tessellation of the unit circle $\T^1$ is defined by 
\[
\T^1 = \bigcup_{j=1}^J D_j, \quad D_j = (c_j-r_j,c_j+r_j) . 
\]

We can compute the length  of the intersection of an interval and its translate
\[
\length\Big(D_j\cap(D_j-z)\Big) = \max\Big(2r_j-|z|,0 \Big)
\]
and so (using $\length(B(0,\rho))=  2\rho$)
\[ 
\begin{split}
\ave{P_\rho} &=  \sum_j \int_{\T^\nattribs}\frac{\chi_\rho(z;0)}{2\rho}\max\Big(2r_j-|z|,0 \Big) dz 
\\
&=  \frac 1{2\rho}  \sum_j \int_{|z|<\rho} \max\Big(2r_j-|z|,0 \Big) dz .
\end{split}
\]
We have
\[
\int_{|z|<\rho} \max\Big(2r-|z|,0 \Big) dz = \begin{cases} 
\int_{-2r}^{2r} 2r-|z| dz = (2r)^2, &\rho>2r\\
\int_{-\rho}^{\rho} 2r-|z| dz =4r\rho-\rho^2,& \rho\leq 2r 
\end{cases}
\]
which gives
\[
\ave{P_\rho}  = \sum_{j: \rho>2r_j} \frac{(2r_j)^2}{2\rho} + \sum_{j:\rho\leq 2r_j}(2r_j-\frac \rho 2)
= \sum_{j: \rho>|D_j|} \frac{(|D_j|)^2}{2\rho} + \sum_{j:\rho\leq |D_j|}(|D_j|-\frac \rho 2)
\]
 noting that $2r_j = \length D_j =:|D_j|$. 
 
 In particular, if $\rho \leq \min(|D_j|)$ then we get 
 \[
 \ave{P_\rho}  =\sum_j (|D_j|-\frac \rho 2) = 1-\frac \rho 2 \cdot J
 \]
 taking into account that $\sum_{j=1}^J |D_j| = \length(\T^1) = 1$. Thus for error radius $\rho$ smaller than the minimal size of the cells, we get a linear function, with slope $-J/2$, where $J$ is the number of cells.  
 
 For $\rho\geq 1/2$, we have $B(x,\rho)=\T^1$ for any $x$, and then instead of dividing by $2\rho$, we divide by $1$, since $B(x,\rho)=\T^1$ for $\rho\geq 1$. Moreover, for all $j$ we must have $|D_j|\leq 1/2\leq \rho$ (assuming $J\geq 2$), and so we are left with 
 \[
 \ave{P_\rho}  =  \sum_{ j} |D_j|^2
 \]
 as claimed. 
 \end{proof}

 \section{ The first variation of $\ave{P_\rho}$}  
 Our goal in this section is to obtain an exact formula for the {\em first variation} of $\ave{P_\rho}$, that is for the slope at $\rho=0$. 
 
 We denote by 
\[ 
\ave{f(z)}_{z\in B(\rho)}:=\frac 1{\Vol B(\rho)}\int_{B(\rho)} f(z)dz
\]
the average of a function on the ball $B(\rho)=B(0,\rho)$. Let 
\begin{equation}
c_{\nattribs} = \frac 12 \ave{|z_1|}_{z\in B(1)}
\end{equation}
(the average over the unit ball).  We shall show (see Lemma~\ref{lem:ave |z|}) that
\begin{equation}\label{formula for c_K}
c_{\nattribs}  = \frac 12 \frac{\Gamma \left(\frac{\nattribs}{2}+1\right)}{\sqrt{\pi } \Gamma \left(\frac{\nattribs+3}{2}\right)}  =
\begin{cases}
\frac 1{\pi} \frac{2^{2m}}{(m+1)\binom{2m+1}{m}},& \nattribs=2m\;{\rm even} \\
\\
  \frac{1}{2^{2m+2}}\binom{2m+1}{m}, &\nattribs=2m+1\;{\rm odd}. 
\end{cases}
\end{equation}
%Thus $c_1=\frac 14$, $c_2=\frac {2}{3\pi}$, $c_3 = \frac{3}{16}$, etcetera. 
% \begin{center}
% \begin{table}[h] 
% \begin{tabular}{ | l|  c|c|c|c|c|c|c|c|c | r |}
%    \hline
%    $\nattribs$ & $1$ & $2$ & $3$ & $4$ & $5$ &  $6$  & $7$  & $8 $ & $9$ & $ 10$ \\  
%     \hline   
%    \large    $c_\nattribs$ &    \large $\frac 14$ &  \large $\frac {2}{3 \pi}$ &  \large  $\frac {3}{16}$ &  \large  $\frac{8}{15 \pi}$ &  \large$\frac{5}{32}$ &  \large$\frac{16}{35\pi}$ & 
 %    \large$\frac{35}{256}$ &  \large$\frac{128}{315 \pi}$ &  \large$\frac{63}{512}$ & \large $\frac{256}{693 \pi}$ \\
 %   \hline
 %   
 %  \end{tabular}
%    \bigskip
 %  \caption{$c_\nattribs= \frac 12 \frac{\Gamma \left(\frac{\nattribs}{2}+1\right)}{\sqrt{\pi } \Gamma \left(\frac{\nattribs+3}{2}\right)}$} 
%\end{table}
%\end{center}

 \begin{prop}\label{prop:expand near 0} 
In dimension $\nattribs \geq 2$, the mean probability for correct assignment $\ave{P_\rho}$ for $\rho$ small is 
\[
\ave{P_\rho} \sim 1- \left(c_{\nattribs} \sum_j \Vol_{\nattribs-1}(\partial D_j) \right)\cdot \rho  ,\quad \rho \searrow 0 .
\]
\end{prop}
 \begin{proof}
 We write formula \eqref{gen formula} as
 \[
 \ave{P_\rho} =     \sum_j \ave{ \Vol\Big(D_j\cap(D_j-z) \Big) }_{z\in B(\rho)} 
\]
 We now use a geometric fact, that for any a reasonable set $D\subset \T^{\nattribs}$, say bounded with piecewise smooth (or even piecewise linear) boundary, such as a Voronoi cell, we have 
 \[
\ave{ \Vol\Big(D\cap(D-z) \Big) }_{z\in B(\rho)}   \sim \Vol(D) - c_{\nattribs}\Vol_{\nattribs-1}(\partial D) \cdot \rho   ,\quad \rho \searrow 0.
\]  
%Surely this is standard but for now I haven't been able to find a reference, so give an argument in 
We give the proof in Section~\ref{sec:average translate}. 
The Proposition follows by summing over all cells $D_j$, noting that $\sum_j \Vol D_j=\Vol \T^{\nattribs}=1$. 
 \end{proof}
 
% \appendix
 \section{Averaging translates of regions}\label{sec:average translate}

Let $D \subset \R^{\nattribs}$ be a reasonable set, say with piecewise smooth (or even piecewise linear) boundary, such as a Voronoi cell. 
Denote by $B(x, \rho)$ the ball or radius $\rho$ around $x$, and set $B(\rho) = B(0,\rho)$. We want to compute the slope at $\rho=0$ of the function
\[
\frac 1{B(\rho)}\int_{B(\rho)}\Vol\Big (D\cap (D-z) \Big) dz
\]  
%Surely this is standard but for now I haven't been able to find a reference.
\begin{prop}\label{lem:conv}
As $\rho\searrow 0$,
\[
\frac 1{B(\rho)}\int_{B(\rho)}\Vol\Big (D\cap (D-z) \Big) dz \sim  \Vol(D) - c_{\nattribs} \Vol_{\nattribs-1}(\partial D) \cdot \rho  .
\]  
\end{prop}

%\begin{cor}
%In dimension $d=2$, the mean probability for correct assignment $\ave{P_\rho}$ for $\rho$ small is 
%\[\ave{P_\rho} \sim 1- \left(c_2\sum_j \length(\partial D_j) \right)\cdot \rho +O(\rho^2),\quad \rho \searrow 0\]
%where $c_2 =  2/(3\pi)$. 
%\end{cor}
% In the one-dimensional case, we saw that for $\rho\leq \min_j |D_j|$, the function $\ave{P_\rho}$ is linear, of the form
% \[ E(P_\rho) = 1-\frac{J}{2}\rho,\quad \rho\leq \min_j |D_j| . \]
% We can interpret this as $E(P_\rho) =1- c_1\sum_j\#\partial D_j \cdot \rho$, with $c_1=1/4$, which is the same form as Lemma~\ref{lem:conv} once we note that the boundary of an interval consists of $2$ points, so that $\#\partial D_j=2$. 
 
\subsection{Proof of Proposition~\ref{lem:conv} when $\nattribs=2$}
We denote by 
\[ \ave{f(z)}_{z\in B(\rho)}:=\frac 1{\area B(\rho)}\int_{B(\rho)} f(z)dz
\]
the average of a function on the ball $B(\rho)=B(0,\rho)$. We want to show that as $\rho \searrow 0$, 
\[
\frac 1{B(\rho)}\int_{B(\rho)}\area\Big (D\cap (D-z) \Big) dz \sim  \area(D) - \frac {2}{3\pi}\length(\partial D) \cdot \rho . 
\]

We will only prove the claim when $D$ is a finite polygon\footnote{A general region is treated by approximating it by a polygon for which the formula is known, and  and then taking a limit over approximating polygons.  }, 
which is the case of a Voronoi cell.

We take $a=a(\rho)$ such that 
\[
\frac{a}{\rho} \to \infty  \quad {\rm but}\quad \frac{a^2}{\rho}\to 0, \quad {\rm as}\;  \rho \to  0 . 
\]
For instance take $a(\rho) = \rho^{3/4}$. 
We  tile the sides of $D$ by a single layer of small squares $\{S_j\}$ of side-length $2a$, with a small remaining region $D_{\rm rem}$  near the corners of the polygon, of total area  $O(a^2)$, and an interior region $D_{\rm int}$ which is at distance $>a$ from the boundary 
as in Figure~\ref{fig:approximate_cell}, so in particular that  any translate $D_{\rm int}-z$ with $|z|\leq \rho \ll a$ is entirely contained in $D$. 
\begin{figure}[ht]
\begin{center}
\includegraphics[height=60mm]{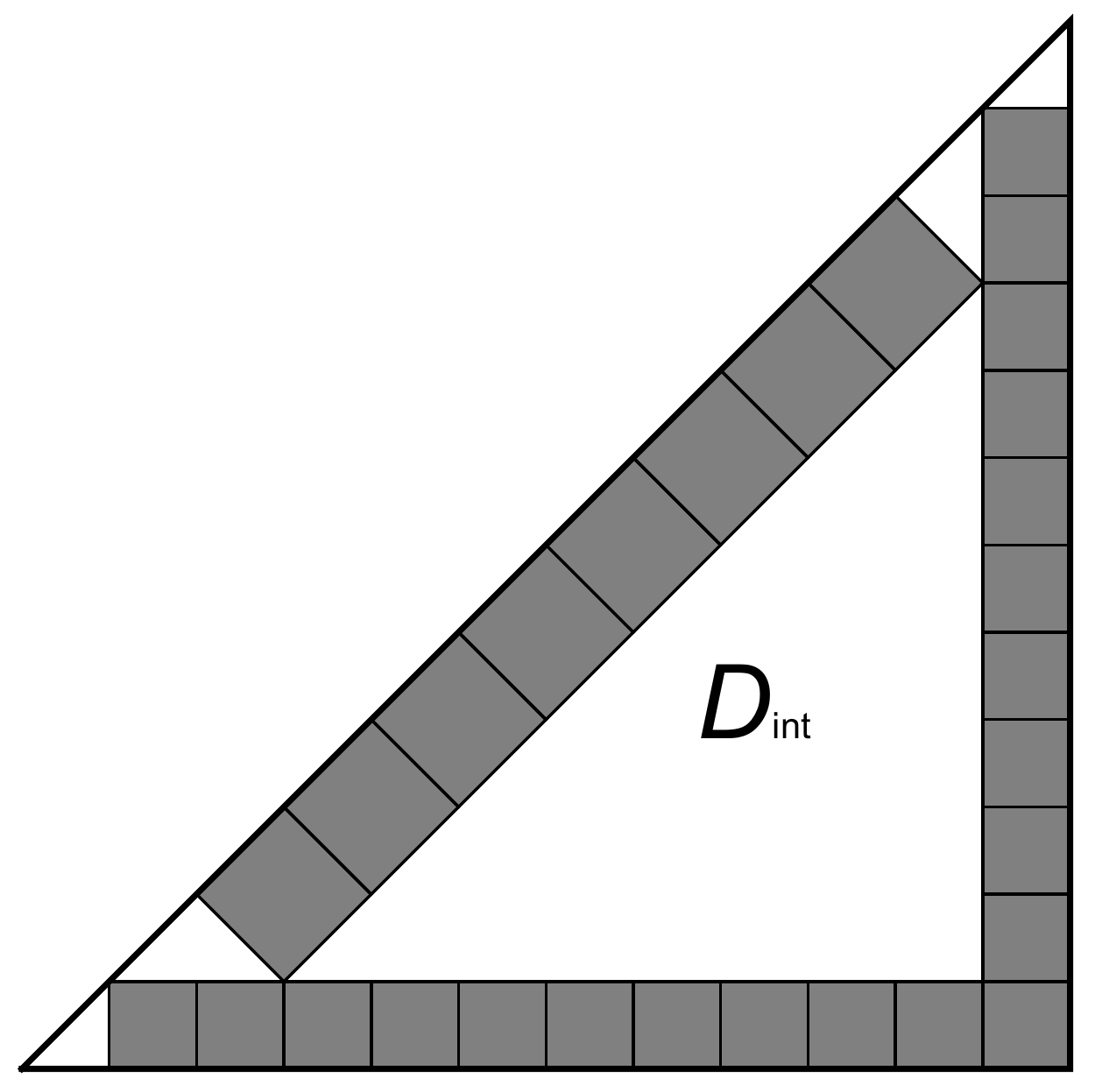}
\caption{  Approximating a right triangle by boundary layer consisting of a union of squares, and an interior part. }
\label{fig:approximate_cell}
\end{center}
\end{figure}

Then the indicator function of $D$ is the sum 
\[ \mathbf 1_D = \mathbf 1_{D_{\rm int}}+\sum_j \mathbf 1_{S_j}+\mathbf 1_{\rm D_{\rm rem}} .
\]
We have
\begin{multline*}
\frac 1{B(\rho)}\int_{B(\rho)}\area\Big (D\cap (D-z) \Big) dz = 
\ave{ \int_{\T^2} \mathbf 1_D(x)\mathbf 1_D(x+z) dx}_{z\in B(\rho)}
\\=
 \ave{ \int_{\T^2} \mathbf 1_D(x)\mathbf 1_{D_{\rm int}}(x+z) dx}_{z\in B(\rho)}
 + \ave{ \int_{\T^2} \mathbf 1_D(x)\mathbf 1_{D_{\rm rem}}(x+z) dx}_{z\in B(\rho)}
\\ + 
\sum_j\ave{ \int_{\T^2} \mathbf 1_D(x)\mathbf 1_{S_j}(x+z) dx}_{z\in B(\rho)} .
\end{multline*}

The translate $D_{\rm int}\subset D$ is entirely contained in $D$ for $|z|\leq \rho$, and so   
$ \mathbf 1_D(x)\mathbf 1_{D_{\rm int}}(x+z) = 1$ for $x\in D$, and then 
\begin{equation}\label{interior region}
\ave{ \int_{\T^2} \mathbf 1_D(x)\mathbf 1_{D_{\rm int}}(x+z) dx}_{z\in B(\rho)} = \area(D_{\rm int}) .
\end{equation}

For the term with the small remaining region, we bound
\[
 \int_{\T^2} \mathbf 1_D(x)\mathbf 1_{D_{\rm rem}}(x+z) dx\leq  \area(D_{\rm rem})   =O(a^2) 
 \]
 and the same holds for the average over $B(\rho)$. Since we assume $a^2/\rho\to 0$, this term is negligible for our purposes. 
  
Next we study the contributions from a  square $S_0$ which tiles the boundary of $D$. All but finitely many (depending on the number of sides of the polygon) have two neighbouring squares $S_j\subset D$, $j=1,2$ having a common side with $S_0$ (Figure~\ref{fig:type3}). We may assume that the boundary is the $x$-axis, and the rectangles $S_j$ lie above it, and the interior region $D_{\rm int}$ lies above the rectangles. 
\begin{figure}[ht]
\begin{center}
\includegraphics[height=20mm]{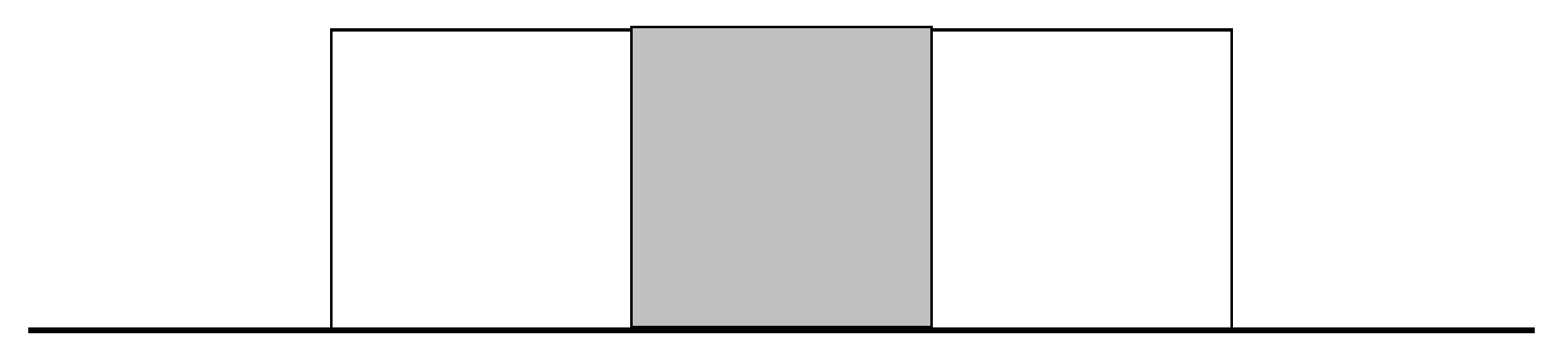}
\caption{  A  boundary square $S_0$ (gray) with two neighbouring squares. }
\label{fig:type3}
\end{center}
\end{figure}

For such a square $S_0$, 
\begin{multline*}
\ave{ \int_{\T^2} \mathbf 1_D(x)\mathbf 1_{S_0}(x+z) dx}_{z\in B(\rho)} =
\ave{ \int_{\T^2} \mathbf 1_{S_0}(x)\mathbf 1_{S_0}(x+z) dx}_{z\in B(\rho)}
\\
+\sum_{j=1}^2 
\ave{ \int_{\T^2} \mathbf 1_{S_j}(x)\mathbf 1_{S_0}(x+z) dx}_{z\in B(\rho)}
\\+ \ave{ \int_{\T^2} \mathbf 1_{D_{\rm int}}(x)\mathbf 1_{S_0}(x+z) dx}_{z\in B(\rho)} .
\end{multline*}
%We assume $\rho$ is much smaller than the side-length of the squares, and then the inner integral  vanishes unless $S_j$ either shares a side or a corner with $S_0$, or if $S_j=S_0$. 

The intersection of a square $S=[-a,a]\times [-a,a]$ and of its translate $S+w$ is a rectangle of side lengths $2a-|w_1|$ and $2a-|w_2|$, if $\max(|w_1|,|w_2|)\leq 2a$, and is empty otherwise. Hence the area of the intersection is 
\[
\begin{split}
\area\Big(S\cap (S-w)\Big) &= (2a-|w_1|)(2a-|w_2|) 
\\&= \area(S)-2a(|w_1|+|w_2|)+|w_1 w_2| .
\end{split}
\]

Hence for $\rho\ll a$, 
\[
\ave{ \int_{\T^2}  \mathbf 1_{S_0}(x)\mathbf 1_{S_0}(x+z) dx}_{z\in B(\rho)} =\ave{ \area(S_0)-2a(|z_1|+|z_2|)+|z_1 z_2| }_{z\in B(\rho)}  . 
 \]
 Now
 \[
 \ave{|z_1|}_{z\in B(\rho)}  = \frac 1{\pi\rho^2} \int_0^{2\pi}\int_0^\rho r|\cos\theta| d\theta rdr =\frac{4}{3\pi}\rho
 \]
 and likewise for $ \ave{|z_2|}_{z\in B(\rho)}$. Further, 
 \[
  \ave{|z_1z_2|}_{z\in B(\rho)} =  \frac 1{\pi\rho^2} \int_0^{2\pi}\int_0^\rho r^2|\cos\theta \sin \theta| d\theta rdr  = \frac{\rho^2}{2\pi} =O(\rho^2)
 \]
 so that 
 \[
 \begin{split}
\ave{ \int_{\T^2}  \mathbf 1_{S_0}(x)\mathbf 1_{S_0}(x+z) dx}_{z\in B(\rho)} 
& -\area(S_0) 
 = -\frac{16a}{3\pi}\rho+O(\rho^2) \\&= -\frac{2}{3\pi}\length(\partial S_0) \cdot \rho +O(\rho^2) . 
 \end{split}
 \]
 
Next we compute the overlap with one of the two neighboring squares $S_j$, sharing a common side (assumed to be the $x$-axis),  
 say $S_j=S_0 + (2a,0)$.  
 %(squares sharing a common vertex will contribute $O(\rho^2)$).  
 The intersection of $S_0$ and $S_j-z = S_0+(2a-z_1,-z_2)$ has area 
\[
(2a-|z_2|)\cdot |z_1|  , \qquad {\rm if} \quad z_1>0
\]
(assuming $|z_1|, |z_2|<2a$), and the intersection is empty if $z_1<0$, so that for $z_1>0$, $\max(|z_1|,|z_2|)<2a$ 
\[
\int_{\T^2}  \mathbf 1_{S_0}(x)\mathbf 1_{S_0+(2a,0)}(x+z) dx  =(2a-|z_2|)\cdot  z_1 ,\quad z_1>0  .
\]
 Since  $\rho\ll a$, the average over the ball $B(0,\rho)$ is 
\[
 \begin{split}
 &\ave{ \int_{\T^2}  \int_{\T^2} 
   \mathbf 1_{S_0}(x)\mathbf 1_{S_0+(2a,0)}(x+z) dx}_{z\in B(\rho)}   =
 \ave{ (2a-|z_2|)\cdot  z_1 \cdot \mathbf 1_{z_1>0}}_{z\in B(\rho)}
 \\
 & =
  \frac 1{\pi \rho^2}\int_0^\rho \int_{-\pi/2}^{\pi/2} \Big( 2a r\cos\theta   -
r|\sin\theta| \cdot r\cos \theta \Big)      \;  rdr d\theta    
    \\
   &= \frac{4a}{3\pi}\rho+O(\rho^2) .
 \end{split}
 \]
 
 Finally, the overlap of $S_0-z$ with $D_{\rm int}$ is a rectangle of side lengths $2a$ and $|z_2|$, if $z_2>0$, and is empty if $z_2<0$ (recall we assume that $D_{\rm int}$ lies above the rectangles). Hence for $\rho<a$, the average over the ball $B(0,\rho)$ is
  \begin{equation*}
 \ave{ \area(D_{\rm int}\cap (S-z) )}_{z\in B(\rho)}  = \ave{2a z_2 \cdot \mathbf 1_{z_2>0} }_{z\in B(\rho)} 
     =    \frac{4a}{3\pi}\rho .
 \end{equation*}

  We thus see that the contribution of each such square is
\begin{equation}\label{contribution per square}
\begin{split}
\ave{ \int_{\T^2} \mathbf 1_D(x)\mathbf 1_{S_0}(x+z) dx}_{z\in B(\rho)} -\area(S_0) 
&=
  -\frac{16a}{3\pi}\rho+3  \frac{4a}{3\pi}\rho+O(\rho^2)
  \\
  &= -\frac{4a}{3\pi}\rho+O(\rho^2) . 
 \end{split}
 \end{equation}

We sum the RHS of \eqref{contribution per square} over all boundary squares and note  that each such square contributes $2a$ (one side) 
to the length of the boundary, and the corresponding sides cover all of the boundary except for that part 
of the boundary lying in the remaining region $D_{\rm rem}$, which has length $O(a)=o(\rho^{1/2})$, so that 
\[ 
\sum_{S_j \;{\rm boundary}} -\frac{4a}{3\pi}\rho
=-\frac{2}{3\pi}\Big( \length \partial D+O(a)\Big) \rho \sim -\frac{2}{3\pi}  \length \partial D \cdot \rho .
\]
This gives 
\begin{multline*}
\sum_{S_j \;{\rm boundary}}  \ave{ \int_{\T^2} \mathbf 1_D(x)\mathbf 1_{S_j}(x+z) dx}_{z\in B(\rho)}  
\\
\sim  \sum_{S_j \;{\rm boundary}} \area(S_j)-\frac{2}{3\pi}  \length \partial D \cdot \rho    .
\end{multline*}
Recalling that the interior region squares contributes exactly its area by \eqref{interior region}, and the remaining region gives a contribution of $O(a^2)=o(\rho)$   gives
\[
  \ave{ \int_{\T^2} \mathbf 1_D(x)\mathbf 1_D(x+z) dx}_{z\in B(\rho)}  
\sim \area(D) -\frac{2}{3\pi} \length (\partial D) \cdot \rho
\]
as claimed. \qed
     
  \subsection{Proof of Proposition~\ref{lem:conv} in general dimension $\nattribs$}
Let $D\subset \R^\nattribs$ be an arbitrary polytope (an intersection of hyperplanes). As in the two-dimensional case, we tile the sides of $D$ by a single layer of small $\nattribs$-dimensional cubes $\{S_j\}$ of side-length $2a$, chosen so that 
\[
\rho = o(a), \qquad a^{ 2} = o(\rho), 
\]
with a small remaining region $D_{\rm rem}$  near the codimension-two faces of the polygon, of total volume  $O(a^2)$, 
and an interior region $D_{\rm int}$ which is at distance $>a$ from the boundary %as in Figure~\ref{fig:approximate_cell}, 
 so in particular that  any translate $D_{\rm int}-z$ with $|z|\leq \rho \ll a$ is entirely contained in $D$.  
 Then the indicator function of $D$ is the sum 
\[ \mathbf 1_D = \mathbf 1_{D_{\rm int}}+\sum_j \mathbf 1_{S_j}+\mathbf 1_{\rm D_{\rm rem}} .
\]

We have
\begin{multline*}
\frac 1{B(\rho)}\int_{B(\rho)}\Vol\Big (D\cap (D-z) \Big) dz = 
\ave{ \int_{\T^\nattribs} \mathbf 1_D(x)\mathbf 1_D(x+z) dx}_{z\in B(\rho)}
\\=
 \ave{ \int_{\T^\nattribs} \mathbf 1_D(x)\mathbf 1_{D_{\rm int}}(x+z) dx}_{z\in B(\rho)}
 + \ave{ \int_{\T^\nattribs} \mathbf 1_D(x)\mathbf 1_{D_{\rm rem}}(x+z) dx}_{z\in B(\rho)}
\\ + 
\sum_j\ave{ \int_{\T^\nattribs} \mathbf 1_D(x)\mathbf 1_{S_j}(x+z) dx}_{z\in B(\rho)} .
\end{multline*}

The translate $D_{\rm int}\subset D$ is entirely contained in $D$ for $|z|\leq \rho$, and so   
$ \mathbf 1_D(x)\mathbf 1_{D_{\rm int}}(x+z) = 1$ for $x\in D$, and then 
\begin{equation}
\label{interior region}
\ave{ \int_{\T^2} \mathbf 1_D(x)\mathbf 1_{D_{\rm int}}(x+z) dx}_{z\in B(\rho)} = \Vol(D_{\rm int}) .
\end{equation}

For the term with the small remaining region, we bound
\[
 \int_{\T^\nattribs} \mathbf 1_D(x)\mathbf 1_{D_{\rm rem}}(x+z) dx\leq  \Vol(D_{\rm rem})   =O(a^2) 
 \]
 and the same holds for the average over $B(\rho)$. Since we assume $a^2/\rho\to 0$, this term is negligible for our purposes. 
  
  Next we study the contributions from a  cube $S_0$ which tiles the boundary of $D$. Recall that a $\nattribs$-dimensional cube has $2\nattribs$ faces of codimension one  (that is of dimension $\nattribs-1$). All but a small number of them  (depending on the number of sides of the polytope) have $2(\nattribs-1)$ neighbouring cubes $S_j\subset D$, $j=1,\dots,2(\nattribs-1)$ having a common $(\nattribs-1)$-dimensional face with $S_0$. We may assume that the relevant part of the boundary is the plane $x_\nattribs=0$, and the cubes $S_j$ lie above it, and the interior region $D_{\rm int}$ lies above the cubes. 

For such a cube $S_0$, 
\begin{multline*}
\ave{ \int_{\T^\nattribs} \mathbf 1_D(x)\mathbf 1_{S_0}(x+z) dx}_{z\in B(\rho)} =
\ave{ \int_{\T^\nattribs} \mathbf 1_{S_0}(x)\mathbf 1_{S_0}(x+z) dx}_{z\in B(\rho)}
\\
+\sum_{j=1}^{2(\nattribs-1)} 
\ave{ \int_{\T^\nattribs} \mathbf 1_{S_j}(x)\mathbf 1_{S_0}(x+z) dx}_{z\in B(\rho)}
\\+ \ave{ \int_{\T^\nattribs} \mathbf 1_{D_{\rm int}}(x)\mathbf 1_{S_0}(x+z) dx}_{z\in B(\rho)} .
\end{multline*}

The intersection of a cube $S=[-a,a]^\nattribs$ and of its translate $S+w$ is a cube  of side lengths $2a-|w_j|$ ($j=1,\dots, \nattribs$), if 
$\max_j(|w_j|)\leq 2a$, and is empty otherwise. Hence the volume of the intersection is 
\[
\Vol(S\cap (S-w)) = \prod_{j=1}^\nattribs (2a-|w_j|) = (2a)^\nattribs- (2a)^{\nattribs-1} \sum_{j=1}^\nattribs |w_j|+\dots 
\]
Therefore for $\rho\ll a$, 
\begin{multline*}
\ave{ \int_{\T^\nattribs}  \mathbf 1_{S_0}(x)\mathbf 1_{S_0}(x+z) dx}_{z\in B(\rho)} 
\\
=\ave{ (2a)^\nattribs-(2a)^{\nattribs-1} \sum_{j=1}^\nattribs |z_j|+\dots }_{z\in B(\rho)} \\
= (2a)^\nattribs-\nattribs(2a)^{\nattribs-1}\ave{ |z_1| }_{z\in B(\rho)} +O(\rho^2) .
\end{multline*}

 Next we compute the overlap with one of the $2(\nattribs-1)$ neighboring cubes $S_j$, sharing a common codimension-one face 
 %(assumed to be the $x$-axis),  
 say $S_j=S_0 + (2a,0,0,\dots,0)$.  
 %(squares sharing a common vertex will contribute $O(\rho^2)$).  
 The intersection of $S_0$ and $S_j-z = S_0+(2a-z_1,-z_2,\dots,-z_\nattribs)$ has volume 
\[
 |z_1|\cdot\prod_{j=2}^\nattribs (2a-|z_j|)  , \qquad {\rm if} \quad z_1>0
\]
(assuming $|z_j|<2a$), and the intersection is empty if $z_1<0$, so that for $z_1>0$, $\max(|z_j|)<2a$ 
\[
\int_{\T^\nattribs}  \mathbf 1_{S_0}(x)\mathbf 1_{S_0+(2a,\vec 0)}(x+z) dx  =z_1 \cdot \prod_{j=2}^\nattribs (2a-|z_j|) = (2a)^{\nattribs-1} \cdot z_1 +O(\rho^2) .
\]
  Since  $\rho\ll a$, the average over the ball $B(0,\rho)$ is 
  \begin{multline*}
  \ave{ \int_{\T^\nattribs}  
  \int_{\T^\nattribs}  \mathbf 1_{S_0}(x)\mathbf 1_{S_0+(2a,\vec 0)}(x+z) dx}_{z\in B(\rho)}  
  \\
   =(2a)^{\nattribs-1}   \ave{z_1 \mathbf 1_{z_1>0}}_{z\in B(\rho)} +O(\rho^2)
  \\
  = (2a)^{\nattribs-1} \frac 12  \ave{|z_1|}_{z\in B(\rho)}+O(\rho^2) .
 \end{multline*}
 
 Finally, the overlap of $S_0-z$ with $D_{\rm int}$ is a box   with base a $(\nattribs-1)$-dimensional cube of side length $2a$ and height $|z_\nattribs|$, if $z_\nattribs>0$, and is empty if $z_\nattribs<0$ (recall we assume that $D_{\rm int}$ lies above the cubes). Hence for $\rho<a$, the average over the ball $B(0,\rho)$ is
\[
    \begin{split}
 \ave{ \Vol(D_{\rm int}\cap (S-z) )}_{z\in B(\rho)}  &= \ave{(2a)^{\nattribs-1} z_\nattribs \cdot \mathbf 1_{z_\nattribs>0} }_{z\in B(\rho)} 
 \\&     = (2a)^{\nattribs-1} \frac 12  \ave{|z_1|}_{z\in B(\rho)} .     
 \end{split}
  \]
     
  Thus each boundary cube gives a contribution of 
  \begin{multline*}
%  \begin{split}
  \ave{ \int_{\T^\nattribs}  \mathbf 1_D(x)\mathbf 1_{S_0}(x+z) dx}_{z\in B(\rho)}  =
  \\
   (2a)^\nattribs -\nattribs(2a)^{\nattribs-1}\ave{ |z_1| }_{z\in B(\rho)} 
  \\ +2(\nattribs-1) (2a)^{\nattribs-1} \frac 12  \ave{|z_1|}_{z\in B(\rho)}
\\ 
+ (2a)^{\nattribs-1} \frac 12  \ave{|z_1|}_{z\in B(\rho)}  + O(\rho^2)
  \\
  \sim (2a)^\nattribs -  \frac 12 (2a)^{\nattribs-1}\ave{|z_1|}_{z\in B(\rho)}  
  \\
  = (2a)^\nattribs -  \frac 12 (2a)^{\nattribs-1}\ave{|z_1|}_{z\in B(1)}  \cdot \rho
  %\end{split}
  \end{multline*}  
  after noting that a change of variables $z=\rho x$, $|x|\leq 1$ gives
\[
\ave{ |z_1| }_{z\in B(\rho)} =\rho \cdot \ave{ |z_1| }_{z\in B(1)} .
\]

  We sum   over all boundary cubes, and note  that each such cube contributes $(2a)^{\nattribs-1}$ (one side) 
to the length of the boundary, and the corresponding sides cover all of the boundary except for that part 
of the boundary lying in the remaining region $D_{\rm rem}$, which has negligible $(\nattribs-1)$-dimensional volume, so that 
\[ 
\sum_{S_j \;{\rm boundary}}\sim \sum_{S_j \;{\rm boundary}} \Vol_\nattribs(S_j) - \frac 12 \Vol_{\nattribs-1}(\partial D)\ave{|z_1|}_{z\in B(1)} \cdot \rho  .
\]

Recalling that the interior region squares contributes exactly its volume, and that the remaining region gives a negligible contribution, 
we obtain 
\[
\begin{split}
  \ave{ \int_{\T^\nattribs} \mathbf 1_D(x)\mathbf 1_D(x+z) dx}_{z\in B(\rho)}  
&\sim \Vol(D) - \frac 12 \ave{|z_1|}_{z\in B(1)} \Vol_{\nattribs-1} (\partial D) \cdot \rho
\\&=\Vol(D) - c_\nattribs \Vol_{\nattribs-1} (\partial D) \cdot \rho
  \end{split}
  \]
\qed

\subsection{Computing $c_{\nattribs}$}
Finally, we compute the value of $c_{\nattribs}$:

\begin{lem}\label{lem:ave |z|} 
\[
\ave{ |z_1| }_{z\in B(1)}  =   
\begin{cases}
\frac 1{\pi} \frac{2^{2m+1}}{(m+1)\binom{2m+1}{m}},& \nattribs=2m\;{\rm even} \\
\\
  \frac{1}{2^{2m+1}}\binom{2m+1}{m}, &\nattribs=2m+1\;{\rm odd}.
\end{cases}
\]
\end{lem}
\begin{proof}
%We change variables $z=\rho x$, $|x|\leq 1$, to find 
%\[
%\ave{ |z_1| }_{z\in B(\rho)} =\rho \cdot \ave{ |z_1| }_{z\in B(1)} .
%\]
We use polar coordinates  in $\nattribs$ dimensions 
\begin{equation*}
x_j = r \cos(\theta_j) \prod_{k=1}^{j-1} \sin\theta_k, \quad j=1,\dots,\nattribs-1
\end{equation*}
and
\begin{equation*}
x_{\nattribs} = r \prod_{k=1}^{\nattribs-1} \sin\theta_k
\end{equation*}
with $0\leq r\leq 1$, $0\leq\theta_j\leq \pi$ for $j=1,\dots,\nattribs-2$ and $0\leq \theta_{\nattribs-1} \leq 2\pi$.
The Jacobian of this transformation is
\begin{equation*}
J_{\nattribs}(r,\theta) = r^{\nattribs-1} \prod_{j=1}^{\nattribs-2} (\sin\theta_j)^{\nattribs-1-j}.  
\end{equation*}
The volume of the unit ball $B(1)$ is thus
\[
\begin{split}
\Vol B(1) &= \int_{r=0}^1  r^{\nattribs-1}  dr \prod_{j=1}^{\nattribs-2}\int_{\theta_j=0}^\pi    (\sin\theta_j)^{\nattribs-1-j}   d\theta_j \int_{\theta_{\nattribs-1}=0}^{2\pi}d\theta_{\nattribs-1} 
\\
&= 
\frac { 2\pi }\nattribs \int_{\theta_1=0}^\pi (\sin\theta_1)^{\nattribs-2}  d\theta_1 \cdot 
\prod_{j=2}^{\nattribs-2} \int_{0}^\pi (\sin\theta_j)^{\nattribs-1-j}   d\theta_j   .
\end{split}
\]
The integral of $|x_1|=r|\cos \theta_1|$ is 
\begin{multline*}
\int_{B(1)}|x_1|dx  =
 \int_{r=0}^1   \int_{\theta_{ 1}=0}^{\pi}  r|\cos \theta_1| (\sin\theta_1)^{\nattribs-2}  d\theta_1  r^{\nattribs-1}   dr 
 \\
  \qquad \times \prod_{j=2}^{\nattribs-2}\int_{0 }^\pi  (\sin \theta_j)^{\nattribs-1-j}   d\theta_j  \int_{ 0}^{2\pi}d\theta_{\nattribs-1} 
\\
 = \frac{2\pi }{\nattribs+1} \int_{\theta_1=0}^\pi |\cos \theta_1| (\sin\theta_1)^{\nattribs-2} d\theta_1 \cdot  \prod_{j=2}^{\nattribs-2} 
\int_0^\pi (\sin\theta_j)^{\nattribs-1-j}   d\theta_j  .
\end{multline*}
The average $\ave{ |z_1| }_{z\in B(1)} $, which is the ratio of these two, is therefore
\[
\ave{ |z_1| }_{z\in B(1)} = \frac { \frac{2\pi }{\nattribs+1} \int_{ 0}^\pi |\cos \theta_1| (\sin\theta_1)^{\nattribs-2} d\theta_1 }
{\frac { 2\pi }\nattribs \int_{ 0}^\pi (\sin \theta_1)^{\nattribs-2} d\theta_1 }
=\frac{\Gamma \left(\frac{\nattribs}{2}+1\right)}{\sqrt{\pi } \Gamma \left(\frac{\nattribs+3}{2}\right)} .
\]

It remains to note that  
\[
\frac{\Gamma \left(\frac{\nattribs}{2}+1\right)}{\sqrt{\pi } \Gamma \left(\frac{\nattribs+3}{2}\right)} =
\begin{cases}
\frac 1{\pi} \frac{2^{2m+1}}{(m+1)\binom{2m+1}{m}},& \nattribs=2m\;{\rm even} \\
\\
  \frac{1}{2^{2m+1}}\binom{2m+1}{m}, &\nattribs=2m+1\;{\rm odd} .
\end{cases}
\]
 \end{proof}

%\begin{thebibliography}{99}
%\bibitem{Gilbert}
%E. N. Gilbert. Random Subdivisions of Space into Crystals. 
%Ann. Math. Statist. Volume 33, Number 3 (1962), 958--972.
%
%\bibitem{Meijering}
%J. L. Meijering, Interface area, edge length, and number of vertices in crystal aggregates with random nucleation, Philips Res. Rep., 8, 270--290 (1953). 
%\end{thebibliography}